\documentclass[11pt,a4paper]{article}
\usepackage[margin=25mm]{geometry}
\usepackage{amsmath,amssymb,amsthm,bm}
\usepackage{graphicx}
\usepackage{booktabs,array,longtable}
\usepackage[hidelinks]{hyperref}
\usepackage[numbers,sort&compress]{natbib}
\usepackage{microtype}
\usepackage{caption}
\usepackage{xcolor}
\DeclareTextSymbol{\textsection}{OMS}{120}\DeclareTextSymbolDefault{\textsection}{OMS}
\DeclareTextSymbol{\textasteriskcentered}{OMS}{3}\DeclareTextSymbolDefault{\textasteriskcentered}{OMS}
\DeclareTextSymbol{\textbullet}{OMS}{15}\DeclareTextSymbolDefault{\textbullet}{OMS}
\newtheorem{theorem}{Theorem}
\newtheorem{lemma}[theorem]{Lemma}
\newtheorem{proposition}[theorem]{Proposition}

\theoremstyle{definition}
\newtheorem{definition}[theorem]{Definition}
\newtheorem{assumption}[theorem]{Assumption}
\newtheorem{facts}[theorem]{Facts}
\newtheorem{example}[theorem]{Example}
\theoremstyle{definition}
\newtheorem{remark}[theorem]{Remark}

\newcommand{\R}{\mathbb{R}}
\newcommand{\Rm}{\bm{R}}
\newcommand{\rv}{\bm{r}}
\newcommand{\Cm}{\Rm_{(X,Y)}}
\newcommand{\ah}{\hat{\bm b}}            
\newcommand{\ahs}{\hat{\bm b}_{\#}}      
\newcommand{\bsh}[1]{\hat b_{#1}^{\#}}    
\newcommand{\rhoM}{\rho_{X\vdash Y}}
\newcommand{\rhoP}[1]{\rho_{#1 Y\mid(-#1)}}
\newcommand{\nR}[1]{\|#1\|_{\Rm^{-1}}}

\newcommand{\tr}{^{\top}}
\DeclareMathOperator{\sd}{sd}
\DeclareMathOperator{\sgn}{sgn}

\DeclareMathOperator{\corr}{corr}
\DeclareMathOperator{\VIF}{VIF}
\DeclareMathOperator{\Var}{Var}
\DeclareMathOperator{\Cov}{Cov}
\newcommand{\versiontag}{(v29)}  
\newcommand{\lab}[1]{\leavevmode\par\noindent\textbf{\upshape(#1)}\ }
\newcommand{\labfirst}[1]{\textbf{\upshape(#1)}\ }

\title{Reading Multiple Regression from an Ellipsoid Inscribed in the Unit Hypercube}
\author{Toshiyuki Shimono\thanks{Research Organization of Information and Systems (ROIS), Tokyo, Japan.
E-mail: \texttt{shimono@ism.ac.jp}}}
\date{5 October 2026 \versiontag}

\begin{document}
\maketitle

\begin{abstract}
Consider the linear regression of a response $Y$ on explanatory variables $X_1,\dots,X_p$ with a positive definite joint correlation matrix. Let $\Rm$ be the correlation matrix of the explanatory variables and $\rv$ the vector of their correlations with $Y$.
The ellipsoid $E=\{\bm{x} : \bm{x}\tr\Rm^{-1}\bm{x}=1\}$ is inscribed in the cube $[-1,1]^p$ and touches the face $x_j=1$ at $T_j$, the $j$th column of $\Rm$.
The point $P=\rv$ lies inside $E$, and three kinds of regression quantities are read off this diagram by elementary geometric constructions.
The multiple correlation is $|OP|/|OP'|$, where $O$ is the origin and $P'$ is the point where the ray from $O$ through $P$ meets $E$ (for $P\ne O$).
The standardized regression coefficients are the coordinates of $P$ in the basis $T_1,\dots,T_p$.
The partial correlation between $X_i$ and $Y$ given the other explanatory variables is the position of $P$ on the chord of $E$ through $P$ parallel to the $x_i$-axis, on a linear scale from $-1$ to $1$ increasing with $x_i$; furthermore, the parallel chord through the origin has half-length $1/\sqrt{\VIF_i}$.
Examples illustrate sign reversal and standardized coefficients outside the interval $[-1,1]$.
\end{abstract}

\noindent\textbf{Keywords:} multiple correlation; partial correlation; correlation ellipse; data ellipsoid; multicollinearity; variance inflation factor.

\section{Introduction}\label{sec:intro}

The linear model
\[
\hat Y=\hat a+\hat b_1X_1+\dots+\hat b_pX_p
\]
is widely used, but when the explanatory variables are correlated it is hard to see from the numbers alone how its coefficients, the marginal correlations and the partial correlations are related.
A coefficient $\hat b_i$ may have the sign opposite to that of the marginal correlation between $X_i$ and $Y$ (and then so does the partial correlation, which has the sign of $\hat b_i$); a standardized coefficient may exceed $1$ in absolute value; and the multiple correlation may be much larger than the absolute value of every marginal correlation.
These phenomena can be explained algebraically under the headings of suppression, multicollinearity and enhancement, and have been analyzed, with accompanying graphs, by \citet{Hamilton1987}, \citet{Shieh2001}, \citet{FriedmanWall2005} and \citet{Waller2011}.
This paper develops one diagram, drawn from the correlations alone, from which three kinds of regression quantities---the multiple correlation, the $p$ standardized coefficients and the $p$ partial correlations---are read off by elementary geometric constructions.

Elliptical geometry has long served regression: the correlation coefficient itself is read from the standardized ellipse \citep[Sects.~3 and~4.1]{FriendlyMonetteFox2013}, and the ellipse of concentration \citep[Sect.~21.10]{Cramer1946}, \citep[Sect.~7.3]{Dempster1969}, called the data ellipse by \citet{FriendlyMonetteFox2013}, makes the ellipse a working tool for multivariate data.
For a \emph{multiple} regression, two common geometric approaches are vectors and projections in the space of observations \citep[Chap.~3]{TakeuchiYanai1972}, \citep{SavilleWood1991,Wickens1995,ONeill2019}, and data ellipsoids in the space of the variables, including three-dimensional displays of the regression plane \citep[Sect.~4.8, Fig.~17]{FriendlyMonetteFox2013}.

What this paper adds is a single configuration in $\R^p$. Its inputs are the $p\times p$ correlation matrix $\Rm$ of the explanatory variables and the vector $\rv$ of their correlations with $Y$. The ellipsoid $E=\{\bm{x} : \bm{x}\tr\Rm^{-1}\bm{x}=1\}$---the data ellipsoid of the standardized explanatory variables---is inscribed in the cube $[-1,1]^p$ and touches its faces at the columns of $\pm\Rm$ (Lemma~\ref{lem:inscribed}). The point $P=\rv$ is placed in the same coordinate space; it lies on or inside $E$, and strictly inside when the full correlation matrix of $(X,Y)$ is positive definite (Lemma~\ref{lem:inside}). Theorems~\ref{thm:multiple}, \ref{thm:coef} and~\ref{thm:partial} show how this configuration yields the multiple correlation, standardized regression coefficients and partial correlations. The construction is valid in any finite dimension $p$, while direct visualization is generally practical for $p\le3$.

Section~\ref{sec:consequences} turns the diagram into statements about sign reversal, suppression, bounds on coefficients and variance inflation.
Section~\ref{sec:example} gives two worked examples, and Section~\ref{sec:related} compares the construction with the related elliptical and vector-geometric methods and says which parts of it we have not found elsewhere.
Appendix~\ref{app:symbols} lists the main symbols used in the paper.

Preliminary versions of Theorems~\ref{thm:multiple}, \ref{thm:coef} and~\ref{thm:partial} were announced without proofs in a conference abstract \citep{Shimono2019}; the present paper gives complete proofs, the closed form of the ruler reading and the consequences.

\section{Notation and classical facts}\label{sec:notation}

Throughout, $X=(X_1,\dots,X_p)\tr$ are the explanatory variables and $Y$ the response.
We work in parallel with a population setting and a sample setting. All variables have finite positive variances in the population setting and positive sample variances in the sample setting (the sample has $n$ rows). Regression means best linear prediction with an intercept in the former and ordinary least squares with an intercept in the latter; equivalently, the slopes can be computed from centered variables or data. In the sample setting, the same divisor is used for all variances and covariances. The formulas \eqref{eq:normal} to \eqref{eq:partial-inverse} below have the same form in both settings.

\begin{definition}[correlation notation]\label{def:corr}
$\rho_{ij}:=\corr(X_i,X_j)$, $\rho_i:=\corr(X_i,Y)$, $\Rm:=(\rho_{ij})\in\R^{p\times p}$, $\rv:=(\rho_1,\dots,\rho_p)\tr\in\R^p$, and the \textbf{full correlation matrix} of $(X,Y)$ is denoted by
\[
\Cm:=\begin{pmatrix}\Rm&\rv\\ \rv\tr&1\end{pmatrix}\in\R^{(p+1)\times(p+1)}.
\]
We write $X_{(-i)}$ for the vector of explanatory variables with $X_i$ removed; $\Rm_{(-i)(-i)}$, $\Rm_{(-i)i}$, and $\rv_{(-i)}$ denote, respectively, $\Rm$ with row and column $i$ deleted, column $i$ of $\Rm$ with entry $i$ deleted, and $\rv$ with entry $i$ deleted.
\end{definition}

\begin{definition}[multiple and partial correlation notation]\label{def:mpcorr}
For a set $A$ of variables and a variable $B$, $\rho_{A\vdash B}\in[0,1]$ denotes the \textbf{multiple correlation coefficient} of the linear regression of $B$ on $A$, as defined by \eqref{eq:multiple} below.
$\rhoP{i}$ is the \textbf{partial correlation} between $X_i$ and $Y$ given $X_{(-i)}$: the correlation between the residuals of $X_i$ and of $Y$ from their linear regressions on $X_{(-i)}$, when both residual variances are positive \citep[p.~270, Eq.~(4g.2.9)]{Rao1973}.
(For $p=1$ the conditioning set is empty, the residuals are the centered versions of $X_1$ and $Y$, and $\rhoP{1}=\rho_1$; with the convention that empty sums are zero, \eqref{eq:partial-classical} below also gives $\rho_1$.)
\end{definition}

\begin{assumption}[positive definiteness]\label{ass:pd}
Throughout this article:
\begin{equation}\label{eq:pd}\tag{A}
\Rm\succ0,\qquad\text{i.e., $\Rm$ is positive definite.}
\end{equation}
\end{assumption}
Then $\Cm\succeq0$, i.e., $\Cm$ is positive semidefinite; moreover, $\Cm\succ0$ unless $Y=a+\sum_i b_iX_i$ for some constants $a,b_1,\dots,b_p$, almost surely in the population setting or as an exact relation among the data columns and the constant column in the sample setting.

We use three classical facts, (F1) to (F3) in Facts~\ref{facts:classical} below, concerning best linear prediction and correlations between linear regression residuals. No normality assumption is required for the identities used here \citep[p.~39]{Anderson2003}. \citet[Sect.~2.5, pp.~33--41]{Anderson2003} and \citet[Sects.~4g.1--4g.2, pp.~266--270]{Rao1973} give covariance-form treatments, while \citet[pp.~68--78]{TakeuchiYanai1972} give the correlation-matrix forms used below; equation-by-equation sources are listed in Appendix~\ref{app:sources}.

\begin{definition}[standardized regression coefficients]\label{def:coefficients}
Let $\ah=(\hat b_1,\dots,\hat b_p)\tr$ be the vector of regression coefficients in $\hat Y=\hat a+\ah\tr X$, where $\hat a$ is the intercept. Define the \textbf{standardized regression coefficients} by
\[
\bsh{i}:=\frac{\sd(X_i)}{\sd(Y)}\hat b_i\quad(1\le i\le p),\qquad \ahs:=(\bsh{1},\dots,\bsh{p})\tr .
\]
Equivalently, these are the regression coefficients after each variable is centered and divided by its standard deviation.
\end{definition}

\begin{facts}[classical facts]\label{facts:classical}\leavevmode
\begin{itemize}
\item[(F1)] \textbf{Standardized regression coefficients.} From the normal equations for the standardized variables,
\begin{equation}\label{eq:normal}
\ahs=\Rm^{-1}\rv.
\end{equation}
\item[(F2)] \textbf{Multiple correlation.}
\begin{equation}\label{eq:multiple}
\rhoM=\sqrt{\rv\tr\Rm^{-1}\rv}.
\end{equation}
Using \eqref{eq:normal}, this is also $\rhoM=\sqrt{\ahs\tr\rv}$, and the residual variance is
\begin{equation}\label{eq:residual-variance}
\Var(Y-\hat a-\ah\tr X)=\sd(Y)^2(1-\rhoM^2).
\end{equation}
The determinant formula for partitioned matrices on $\Cm$ gives
\begin{equation}\label{eq:det}
\det\Cm=\det\Rm\cdot(1-\rv\tr\Rm^{-1}\rv)=\det\Rm\cdot(1-\rhoM^2).
\end{equation}
\item[(F3)] \textbf{Partial correlation.} Suppose that $\Cm\succ0$.
\begin{equation}\label{eq:partial-classical}
\rhoP{i}=\frac{\rho_i-\rv_{(-i)}\tr\Rm_{(-i)(-i)}^{-1}\Rm_{(-i)i}}{\sqrt{1-\rho^2_{X_{(-i)}\vdash X_i}}\;\sqrt{1-\rho^2_{X_{(-i)}\vdash Y}}}
\end{equation}
and, equivalently,
\begin{equation}\label{eq:partial-inverse}
\rhoP{i}=-\frac{R^{\,i,p+1}_{(X,Y)}}{\sqrt{R^{\,i,i}_{(X,Y)}\,R^{\,p+1,p+1}_{(X,Y)}}}\quad\text{with }R^{\,i,j}_{(X,Y)}:=(\Cm^{-1})_{ij}.
\end{equation}
\end{itemize}
\end{facts}

Vectors in $\R^p$ are columns; $\bm{e}_1,\dots,\bm{e}_p$ denote the standard basis vectors ($\bm{e}_i$ has a $1$ in position $i$ and $0$ elsewhere), so that $\bm{e}_i\tr \bm{x}=x_i$ and $\bm M\bm{e}_i$ is the $i$-th column of a matrix $\bm M$; $\delta_{ij}$ is the Kronecker delta, and $|\cdot|$ the Euclidean length.

\paragraph{Geometric notation.}
For an ellipsoid $\{\bm{x} : \bm{x}\tr\bm H\bm{x}=1\}$ with $\bm H$ symmetric positive definite we use the inner product $\langle \bm{u},\bm{v}\rangle_{\bm H}=\bm{u}\tr\bm H\bm{v}$ and the norm $\|\bm{u}\|_{\bm H}=\langle \bm{u},\bm{u}\rangle_{\bm H}^{1/2}$.
Two directions $\bm{u},\bm{v}$ are \textbf{conjugate} relative to the ellipsoid if $\langle \bm{u},\bm{v}\rangle_{\bm H}=0$ (the terminology of Dempster \citep[Sect.~3.2]{Dempster1969}); the \textbf{diametral hyperplane} conjugate to a direction $\bm{d}$ is $\{\bm{x} : \langle \bm{d},\bm{x}\rangle_{\bm H}=0\}$, the hyperplane through the center that bisects every chord parallel to $\bm{d}$; the tangent hyperplane at a boundary point $\bm{x}_0$ is $\{\bm{x} : \langle \bm{x}_0,\bm{x}\rangle_{\bm H}=1\}$.

\section{The inscribed ellipsoid and the point $P$}\label{sec:geometry}

\begin{definition}[geometric configuration]\label{def:objects}
In $\R^p$ with orthogonal coordinates $x_1,\dots,x_p$ let
\begin{align*}
O&=(0,\dots,0)\tr\in\R^p &&\text{(the origin),}\\
U&=[-1,1]^p &&\text{(the unit hypercube, a square when $p=2$),}\\
E&=\{\bm{x}\in\R^p : \bm{x}\tr\Rm^{-1}\bm{x}=1\} &&\text{(the inscribed ellipsoid, a surface),}\\
\bar E&=\{\bm{x}\in\R^p : \bm{x}\tr\Rm^{-1}\bm{x}\le1\} &&\text{(the solid ellipsoid bounded by $E$),}\\
T_i&=\Rm \bm{e}_i=(\rho_{1i},\dots,\rho_{pi})\tr &&\text{(the $i$-th column of $\Rm$), $i=1,\dots,p$,}\\
P&=\rv=(\rho_1,\dots,\rho_p)\tr &&\text{(shown as an unfilled circle in Figures~\ref{fig:config} to~\ref{fig:wedge}),}\\
\ell_i&=\{P+t\bm{e}_i : t\in\R\} &&\text{(the line through $P$ parallel to the $x_i$-axis), $i=1,\dots,p$.}
\end{align*}
Here $\bm{x}\tr\Rm^{-1}\bm{x}=\nR{\bm{x}}^2$ (Section~\ref{sec:notation}), so $E=\{\bm{x} : \nR{\bm{x}}=1\}$ and $\bar E=\{\bm{x} : \nR{\bm{x}}\le1\}$.
Throughout, ``ellipsoid'' means the surface $E$ (on which tangency points, chord endpoints and the point $P'$ of Theorem~\ref{thm:multiple} lie); statements about volumes and \textbf{shadows} (orthogonal projections onto coordinate subspaces; the term follows Dempster \citep[Sect.~6.6]{Dempster1969}) refer to the solid $\bar E$.
The coordinates of $P$ are the correlations of $Y$ with $X$.
\end{definition}

The whole figure is determined by the $(p+1)\times(p+1)$ correlation matrix $\Cm$ and nothing else.
Figure~\ref{fig:config} shows it for $p=2$ (note that $\rho_{11}=\rho_{22}=1$ and $\rho_{12}=\rho_{21}\in[-1,1]$).

\begin{figure}[t]
\centering
\includegraphics[width=0.46\textwidth]{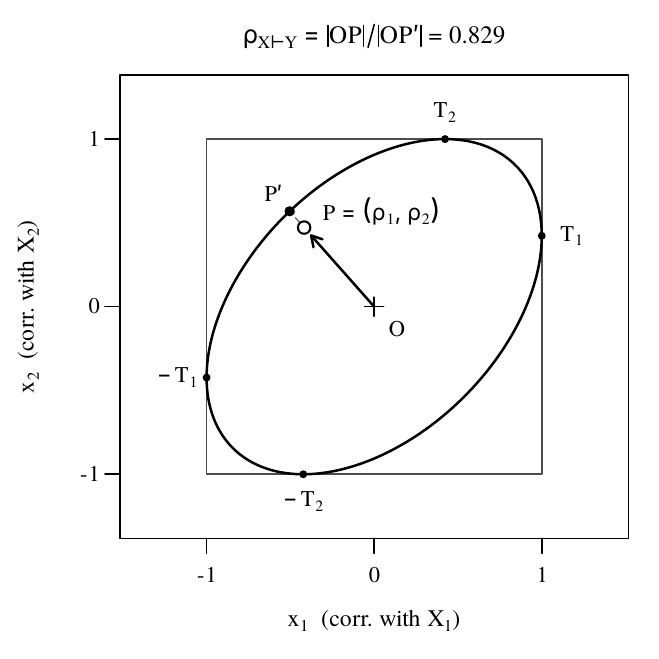}
\caption{The configuration for $p=2$ (Example~\ref{ex:baseball}: $\rho_{12}=0.423$, $\rv=(-0.418,0.472)\tr$).
The square $U$, the ellipse $E$ touching $U$ at $\pm T_1=\pm(1,\rho_{12})$ and $\pm T_2=\pm(\rho_{12},1)$, the point $P=\rv$, and the point $P'$ where the ray $OP$ meets $E$.
Theorem~\ref{thm:multiple}: the multiple correlation is $|OP|/|OP'|$. $O$ is marked by a cross, $P$ by an unfilled circle, and $P'$ and the tangency points by filled dots.}
\label{fig:config}
\end{figure}

\begin{lemma}[inscribed ellipsoid]\label{lem:inscribed}
Let $\bm M$ be a $p\times p$ matrix with $\bm M=\bm M\tr\succ0$ and unit diagonal, and let $E_{\bm M}=\{\bm{x} : \bm{x}\tr\bm M^{-1}\bm{x}=1\}$:
\lab{a} (tangency) $E_{\bm M}\subset U$, and $E_{\bm M}\cap\partial U=\{\pm\bm Me_1,\dots,\pm\bm Me_p\}$; the tangent hyperplanes of $E_{\bm M}$ at $\pm\bm M\bm{e}_i$ are $\{\bm{x} :x_i=\pm1\}$: they contain a pair of faces of $U$.
\lab{b} (uniqueness) If $\bm H=\bm H\tr\succ0$ and, for every $i$, the ellipsoid $\{\bm{x} : \bm{x}\tr\bm H\bm{x}=1\}$ is tangent to the face $x_i=1$ at the point $\bm M\bm{e}_i$, then $\bm H=\bm M^{-1}$.
\lab{c} (conjugacy) $\langle \bm M\bm{e}_i, \bm{e}_j\rangle_{\bm M^{-1}}=\delta_{ij}$: $\bm M\bm{e}_i$ is conjugate to $\bm{e}_j$, $j\ne i$; equivalently, the coordinate hyperplane $x_i=0$ is the diametral hyperplane conjugate to $\bm M\bm{e}_i$.
\end{lemma}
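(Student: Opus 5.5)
The whole lemma rests on one tool, the Cauchy--Schwarz inequality in the inner product $\langle\cdot,\cdot\rangle_{\bm M^{-1}}$, together with the identity $\langle \bm M\bm{e}_i,\bm{x}\rangle_{\bm M^{-1}}=\bm{e}_i\tr\bm M\bm M^{-1}\bm{x}=x_i$. I would prove part (c) first, since this identity is exactly its content. Taking $\bm{x}=\bm{e}_j$ gives $\langle \bm M\bm{e}_i,\bm{e}_j\rangle_{\bm M^{-1}}=\delta_{ij}$. For $\bm{x}$ arbitrary, the same identity shows that the diametral hyperplane conjugate to $\bm M\bm{e}_i$, namely $\{\bm{x}:\langle \bm M\bm{e}_i,\bm{x}\rangle_{\bm M^{-1}}=0\}$, is $\{x_i=0\}$.

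For (a), I note first that $\|\bm M\bm{e}_i\|_{\bm M^{-1}}^2=\bm{e}_i\tr\bm M\bm{e}_i=1$ by the unit diagonal, so $\pm\bm M\bm{e}_i\in E_{\bm M}$. For $\bm{x}\in E_{\bm M}$, Cauchy--Schwarz gives $|x_i|=|\langle \bm M\bm{e}_i,\bm{x}\rangle_{\bm M^{-1}}|\le\|\bm M\bm{e}_i\|_{\bm M^{-1}}\|\bm{x}\|_{\bm M^{-1}}=1$, hence $E_{\bm M}\subset U$. Equality $|x_i|=1$ holds exactly when $\bm{x}$ is parallel to $\bm M\bm{e}_i$; both vectors have unit norm, so $\bm{x}=\pm\bm M\bm{e}_i$. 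This identifies $E_{\bm M}\cap\partial U$. The tangent hyperplane at $\pm\bm M\bm{e}_i$ is $\{\bm{x}:\langle\pm\bm M\bm{e}_i,\bm{x}\rangle_{\bm M^{-1}}=1\}=\{x_i=\pm1\}$, again by the identity.

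For (b), tangency of $\{\bm{x}\tr\bm H\bm{x}=1\}$ to the face $x_i=1$ at $\bm M\bm{e}_i$ means two things. The point lies on the ellipsoid, and the tangent hyperplane there, $\{\bm{x}:(\bm H\bm M\bm{e}_i)\tr\bm{x}=1\}$, equals $\{\bm{e}_i\tr\bm{x}=1\}$. Two hyperplanes written in the form $\bm a\tr\bm{x}=1$ coincide only if their normal vectors are equal, so $\bm H\bm M\bm{e}_i=\bm{e}_i$ for every $i$. This gives $\bm H\bm M=\bm I$, that is, $\bm H=\bm M^{-1}$.

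The only step needing care is (b). There I must justify the form of the tangent hyperplane for a general $\bm H$, which is the stated formula $\langle \bm{x}_0,\bm{x}\rangle_{\bm H}=1$ from the geometric notation, and I must note why the normalization "$=1$" lets me compare normals directly rather than only up to a scalar. Everything else is a one-line computation.
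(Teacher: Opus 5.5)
Your proof is correct and follows the paper's argument. For (a) and (c) both use Cauchy--Schwarz in $\langle\cdot,\cdot\rangle_{\bm M^{-1}}$ together with the identity $\langle \bm M\bm{e}_i,\bm{x}\rangle_{\bm M^{-1}}=x_i$, and for (b) both show that tangency forces $\bm H\bm M\bm{e}_i=\bm{e}_i$. The one cosmetic difference is in (b): the paper writes the gradient condition as $\bm H\bm M\bm{e}_i=\lambda_i\bm{e}_i$ and then gets $\lambda_i=1$ from $(\bm M\bm{e}_i)\tr\bm H\bm M\bm{e}_i=1$, and this is exactly what your normalized tangent-hyperplane form $\langle \bm{x}_0,\bm{x}\rangle_{\bm H}=1$ already encodes.
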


\begin{proof}
\labfirst{a} For $\bm{x}\in E_{\bm M}$ and each $i$, the Cauchy--Schwarz inequality in $\langle\cdot,\cdot\rangle_{\bm M^{-1}}$ gives
\[
|x_i|=|\bm{e}_i\tr \bm{x}|=|\langle \bm M\bm{e}_i,\bm{x}\rangle_{\bm M^{-1}}|\le\|\bm M\bm{e}_i\|_{\bm M^{-1}}\cdot\|\bm{x}\|_{\bm M^{-1}}=\sqrt{e_i\tr\bm M\bm{e}_i}\cdot1=1,
\]
with equality iff $\bm{x}=c\,\bm M\bm{e}_i$ for a scalar $c$; $\bm{x}\in E_{\bm M}$ forces $c^2 \bm{e}_i\tr\bm M\bm{e}_i=c^2=1$, i.e.\ $c=\pm1$.
The gradient of $\bm{x}\tr\bm M^{-1}\bm{x}$ at $\bm M\bm{e}_i$ is $2\bm M^{-1}\bm M\bm{e}_i=2\bm{e}_i$, so the tangent hyperplane there is $\{\bm{x} : \bm{e}_i\tr \bm{x}=\bm{e}_i\tr\bm M\bm{e}_i=1\}$.
\lab{b} Tangency to $x_i=1$ at $\bm M\bm{e}_i$ means that the gradient $2\bm H\bm M\bm{e}_i$ is parallel to $\bm{e}_i$, say $\bm H\bm M\bm{e}_i=\lambda_i\bm{e}_i$; since $\bm M\bm{e}_i$ lies on the ellipsoid, $1=(\bm M\bm{e}_i)\tr\bm H\bm M\bm{e}_i=\lambda_i\,\bm{e}_i\tr\bm M\bm{e}_i=\lambda_i$.
Hence $\bm H\bm M\bm{e}_i=\bm{e}_i$ for every $i$, so $\bm H\bm M=\bm I$.
\lab{c} $(\bm M\bm{e}_i)\tr\bm M^{-1}\bm{e}_j=\bm{e}_i\tr \bm{e}_j$.
\end{proof}

By Lemma~\ref{lem:inscribed}, $E$ touches the cube at $\pm T_i=\pm\Rm \bm{e}_i$ and is the unique origin-centered ellipsoid tangent to its faces at these prescribed points.
For $p=2$, $E$ is the ellipse through $\pm(1,\rho_{12})$ and $\pm(\rho_{12},1)$, with semi-axis $\sqrt{1+\rho_{12}}$ along $x_1=x_2$ and $\sqrt{1-\rho_{12}}$ along $x_1=-x_2$ (a circle when $\rho_{12}=0$); it is a level curve of the standardized bivariate normal density and corresponds, up to translation and uniform scaling, to the elliptical glyph of Murdoch and Chow \citep{MurdochChow1996} (a description of the shape only; normality is not assumed).

Conversely, let $\{\bm{x} : \bm{x}\tr\bm H\bm{x}=1\}$, $\bm H$ symmetric positive definite, be an origin-centered ellipsoid contained in $U$ and tangent to every face. The Cauchy--Schwarz argument of (a), with $\bm H^{-1}$ in place of $\bm M$, gives $\max\{|x_i| : \bm{x}\tr\bm H\bm{x}=1\}=\sqrt{(\bm H^{-1})_{ii}}$, attained at $\bm{x}=\pm\bm H^{-1}\bm{e}_i/\sqrt{(\bm H^{-1})_{ii}}$.
Tangency to the faces $x_i=\pm1$ means that this maximum is $1$ for every $i$; so $\bm M=\bm H^{-1}$ has unit diagonal, i.e.\ it is a positive definite correlation matrix, and the point of tangency with the face $x_i=1$ is $\bm M\bm{e}_i$, the $i$-th column of $\bm M$.
Hence $\bm M\mapsto E_{\bm M}$ is a bijection between positive definite correlation matrices and nondegenerate origin-centered ellipsoids tangent to every face of the cube, under which the $i$-th column of $\bm M$ is the point of tangency with the face $x_i=1$.

\begin{lemma}[$P$ lies on or inside $E$]\label{lem:inside}
\lab{a} $\nR{P}=\rhoM\le1$.
\lab{b} $\nR{P}=1$ iff $\Cm\nsucc0$, i.e.\ iff $Y-c$ is a linear combination of $X_1,\dots,X_p$ for some $c\in\R$.
\lab{c} $P$ is interior to $\bar E$ iff $\Cm\succ0$. The ellipsoid $E$ is nondegenerate by \eqref{eq:pd}.
\end{lemma}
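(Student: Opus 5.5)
The plan is to read everything off the classical facts (F2) together with the Schur-complement determinant identity \eqref{eq:det}, using only the identity $\nR{P}^2=\rv\tr\Rm^{-1}\rv$ from Definition~\ref{def:objects}.

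\textbf{Part (a).} By Definition~\ref{def:objects} and \eqref{eq:multiple},
\[
\nR{P}^2=\rv\tr\Rm^{-1}\rv=\rhoM^2 .
\]
Both sides are nonnegative, so $\nR{P}=\rhoM$. For the bound $\rhoM\le1$ I will use \eqref{eq:residual-variance}: the residual variance $\sd(Y)^2(1-\rhoM^2)$ is nonnegative and $\sd(Y)>0$. As an alternative, one can use \eqref{eq:det}: $\Cm\succeq0$ gives $\det\Cm\ge0$, and $\det\Rm>0$ by \eqref{eq:pd}, so $1-\rhoM^2\ge0$.

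\textbf{Part (b).} From \eqref{eq:det} and $\det\Rm>0$,
\[
\nR{P}=1 \iff 1-\rhoM^2=0 \iff \det\Cm=0 .
\]
Since $\Cm\succeq0$, we have $\det\Cm=0$ iff $\Cm$ is not positive definite. This settles the first equivalence. The second, that $\Cm\nsucc0$ iff $Y-c$ is a linear combination of the $X_i$, is the remark following Assumption~\ref{ass:pd}. I will also justify it directly. Because $\sd(Y)>0$, \eqref{eq:residual-variance} shows that $\rhoM=1$ iff the residual $Y-\hat a-\ah\tr X$ has zero variance. That holds iff $Y=\hat a+\ah\tr X$ almost surely, or exactly in the sample setting. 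Conversely, if $Y=c+\bm b\tr X$, then the best linear predictor reproduces $Y$, so the residual variance is $0$.

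\textbf{Part (c).} The interior of $\bar E=\{\nR{\bm x}\le1\}$ is $\{\nR{\bm x}<1\}$, because $\nR{\cdot}$ is a norm and therefore continuous. Parts (a) and (b) then give that $P$ is interior iff $\nR{P}<1$ iff $\Cm\succ0$. Nondegeneracy of $E$ is immediate: $\Rm^{-1}\succ0$ by \eqref{eq:pd}, so $E$ is a genuine ellipsoid with positive semi-axes equal to the square roots of the eigenvalues of $\Rm$.

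The only point needing care is the characterization of linear dependence in (b) across the population and sample settings. I will handle both settings uniformly through the residual-variance formula \eqref{eq:residual-variance}, which holds in either setting.
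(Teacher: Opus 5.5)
Your proposal is correct and takes essentially the paper's route. The paper's one-line proof uses exactly the determinant identity \eqref{eq:det} with $\Cm\succeq0$ and $\det\Rm>0$, and (a)--(c) are read off from it as you do; your residual-variance argument via \eqref{eq:residual-variance} is a harmless addition that makes the ``i.e.'' clause of (b) self-contained, where the paper relies on the remark after Assumption~\ref{ass:pd}.
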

\begin{proof}
By \eqref{eq:det}, $1-\nR{P}^2=\det\Cm/\det\Rm\ge0$ since $\Cm\succeq0$ and $\det\Rm>0$ by \eqref{eq:pd}.
\end{proof}

\section{The three theorems}\label{sec:theorems}

\subsection{Multiple correlation as a ratio of lengths}

\begin{theorem}[multiple correlation by a length ratio]\label{thm:multiple}
Let $P'$ be the intersection of $E$ with the ray from $O$ through $P$ (except when $P=O$):
\[
\frac{|OP|}{|OP'|}=\rhoM\qquad\text{(multiple correlation)}.
\]
\end{theorem}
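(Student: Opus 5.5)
The plan is to parametrize the ray and use homogeneity of the norm $\nR{\cdot}$. Assume $P\ne O$, so $\rv\ne\bm 0$. Every point of the ray from $O$ through $P$ has the form $t\rv$ with $t>0$, and
\[
\nR{t\rv}=t\,\nR{\rv}.
\]
By \eqref{eq:pd}, $\Rm^{-1}\succ0$, so $\nR{\rv}>0$ because $\rv\ne\bm 0$. The point $t\rv$ lies on $E$ exactly when $t\,\nR{\rv}=1$. Hence the ray meets $E$ in exactly one point,
\[
P'=\rv/\nR{\rv},
\]
which confirms that $P'$ in the statement is well defined.

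Next compare Euclidean lengths along the same ray. Since $P'=t'P$ with $t'=1/\nR{\rv}>0$, we have $|OP'|=t'|OP|$. Therefore
\[
\frac{|OP|}{|OP'|}=\frac{1}{t'}=\nR{\rv}=\sqrt{\rv\tr\Rm^{-1}\rv}.
\]
By \eqref{eq:multiple} in (F2), this equals $\rhoM$; equivalently, it is Lemma~\ref{lem:inside}(a), which gives $\nR{P}=\rhoM$. Lemma~\ref{lem:inside} also gives $|OP|\le|OP'|$, so $P$ lies between $O$ and $P'$, consistent with the figure, although the ratio identity does not need this.

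There is no real obstacle. The only points needing care are that $P'$ exists and is unique, which relies on the positive definiteness in \eqref{eq:pd} and on $P\ne O$, and that the ratio of Euclidean lengths of collinear vectors on a common ray equals the ratio of their scaling factors. The excluded case $P=O$ corresponds to $\rv=\bm 0$ and $\rhoM=0$, where the ray is undefined.
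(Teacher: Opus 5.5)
Your proposal is correct and follows the paper's proof essentially verbatim: by homogeneity of $\nR{\cdot}$, the point $P'=P/\nR{P}$ is the unique point of the ray lying on $E$, so $|OP|/|OP'|=\nR{P}=\rhoM$ by Lemma~\ref{lem:inside}(a). Your extra checks make explicit two points the paper leaves implicit, namely that $\nR{\rv}>0$ by \eqref{eq:pd} and that Euclidean lengths scale by the same factor along the ray.
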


\begin{proof}
Since $\nR{cP}=c\,\nR{P}$ for $c>0$, the point $P'=P/\nR{P}$ is the unique point of the ray with $\nR{\bm{x}}=1$.
Hence $|OP|/|OP'|=\nR{P}=\rhoM$ by Lemma~\ref{lem:inside}(a).
\end{proof}

For $p=2$, the ellipse $\rhoM E=\{\bm{x} : \nR{\bm{x}}=\rhoM\}$ passes through $P$, and the coefficient of determination $\rhoM^2$ is the ratio of the area it encloses to the area enclosed by $E$.

Theorem~\ref{thm:multiple} says that $\rhoM$ is the norm of $\rv$ in the metric $\Rm^{-1}$, i.e.\ a Mahalanobis length; in practice one never needs $\Rm^{-1}$: draw $E$ from the tangency points, mark $P$, and measure.
Three special positions are instructive.
(i) If $\Rm=\bm I$ then $E$ is the unit sphere and $\rhoM=|OP|=(\sum\rho_i^2)^{1/2}$.
(ii) If $P$ lies on the line through $O$ and $T_{i}$, i.e.\ $P=\rho_{i}T_{i}$, equivalently $\rho_{j}=\rho_{i}\rho_{ij}$ for all $j$, then $\rhoM=|\rho_{i}|$ and $P'=\operatorname{sgn}(\rho_{i})\,T_{i}$ (if $P\ne O$): the other variables do not improve linear prediction of $Y$.
(iii) The inequality $\rhoM^2>\sum\rho_i^2$ studied by Hamilton \citep{Hamilton1987} and known as ``enhancement'' \citep{Waller2011}, is the situation in which $E$ is narrower than the unit sphere in the direction of $P$: by Theorem~\ref{thm:multiple}, $\rhoM^2>|OP|^2$ iff $|OP'|<1$ (for $P\ne O$).
For $p=2$ with $\rho_{12}=0.8$ and $\rv=(0.7,0.2)\tr$, \eqref{eq:multiple} gives $\rhoM^2=0.85$, although $|OP|^2=\rho_1^2+\rho_2^2=0.53$: by Theorem~\ref{thm:multiple}, $|OP'|=|OP|/\rhoM=\sqrt{0.53/0.85}\approx0.790<1$, so $E$ is narrower than the unit sphere in the direction of $P$ (this configuration is examined in Example~\ref{ex:artificial}).

\subsection{Standardized coefficients as a linear scalar field}

\begin{theorem}[standardized regression coefficients by linear scalar fields]\label{thm:coef}
Let $f_i:\R^p\to\R$ $(1\le i\le p)$ be the linear functional determined by $f_i(T_j)=\delta_{ij}$ $(1\le j\le p)$:
\lab{a} $P=\sum_{i=1}^p\bsh{i}T_i$: the coordinates of $P$ in the basis $\{T_i\}_{i=1}^p$.
\lab{b} $f_i(P)=\bsh{i}$ for $i=1,\dots,p$ (standardized regression coefficient).
\end{theorem}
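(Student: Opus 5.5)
The plan is to reduce both parts to the normal equations \eqref{eq:normal} of (F1), using that the $T_i$ are the columns of $\Rm$. First I check that the $f_i$ are well defined. By \eqref{eq:pd}, $\Rm$ is invertible, so its columns $T_1,\dots,T_p$ form a basis of $\R^p$. A linear functional is then uniquely determined by its values on $T_1,\dots,T_p$, so each $f_i$ exists and is unique.

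For (a), I write $\sum_{i=1}^p \bsh{i}T_i=\sum_i \bsh{i}\Rm\bm{e}_i=\Rm\ahs$. By \eqref{eq:normal}, $\Rm\ahs=\Rm\Rm^{-1}\rv=\rv=P$. Since $\{T_i\}$ is a basis, this expansion is the unique one, so the $\bsh{i}$ are exactly the coordinates of $P$ in that basis.

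For (b), I identify $f_i$ explicitly. I claim $f_i(\bm{x})=\bm{e}_i\tr\Rm^{-1}\bm{x}=\langle \bm{e}_i,\bm{x}\rangle_{\Rm^{-1}}$. Indeed $\bm{e}_i\tr\Rm^{-1}\Rm\bm{e}_j=\delta_{ij}$, which is also Lemma~\ref{lem:inscribed}(c) with $\bm M=\Rm$, and uniqueness of $f_i$ then gives the identification. Applying this to $P$ gives $f_i(P)=\bm{e}_i\tr\Rm^{-1}\rv=(\ahs)_i=\bsh{i}$, again by \eqref{eq:normal}. Alternatively, (b) follows from (a) by linearity: $f_i(P)=\sum_j\bsh{j}f_i(T_j)=\bsh{i}$.

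I do not expect a genuine obstacle. The only points needing care are invoking \eqref{eq:pd} for the basis property and uniqueness, and noting that no assumption $\Cm\succ0$ is required. Geometrically, I would add a remark that the level sets of $f_i$ are hyperplanes parallel to the span of $\{T_j\}_{j\ne i}$. In particular, the zero level set is the diametral hyperplane conjugate to $\bm{e}_i$, and $f_i=1$ on the hyperplane through $T_i$ parallel to it.
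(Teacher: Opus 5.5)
Your proposal is correct and matches the paper's proof essentially step for step. Invertibility of $\Rm$ gives the basis $\{T_i\}$. Part (a) is the normal equations $\Rm\ahs=\rv$ read column by column. Part (b) comes from the explicit formula $f_i(\bm{x})=\bm{e}_i\tr\Rm^{-1}\bm{x}$, checked on the basis via $\bm{e}_i\tr\Rm^{-1}\Rm\bm{e}_j=\delta_{ij}$. Your alternative derivation of (b) from (a) by linearity, $f_i(P)=\sum_j\bsh{j}f_i(T_j)$, is also valid and avoids that formula altogether.
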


\begin{proof}
\labfirst{a} $T_1,\dots,T_p$ are the columns of the invertible matrix $\Rm$, hence a basis. By \eqref{eq:normal} in (F1), $\Rm\ahs=\rv$, which reads $P=\sum_{i=1}^p\bsh{i}T_i$.

\lab{b} Since $T_1,\dots,T_p$ form a basis, each $f_i$ exists and is unique; explicitly,
\begin{equation}\label{eq:field}
f_i(\bm{x})=\bm{e}_i\tr\Rm^{-1}\bm{x}=\langle \bm{e}_i,\bm{x}\rangle_{\Rm^{-1}},
\end{equation}
because $\bm{e}_i\tr\Rm^{-1}\Rm \bm{e}_j=\delta_{ij}$. Thus $f_i(P)=(\Rm^{-1}\rv)_i=\bsh{i}$.
\end{proof}

The level sets $f_i(\bm{x})=c$ are parallel hyperplanes (Figure~\ref{fig:field}): $f_i(\bm{x})=0$ is the hyperplane through $O$ spanned by $\{T_j\}_{j\ne i}$, and $f_i(\bm{x})=\pm1$ are the parallel hyperplanes through $\pm T_i$.
The shading of $\bar E$ by $f_i$ is a ``linear scalar field'' from which $\bsh{i}$ is read at $P$.
Since $f_i(\bm{x})=\langle \bm{e}_i,\bm{x}\rangle_{\Rm^{-1}}$ by \eqref{eq:field}, the hyperplane $f_i(\bm{x})=0$ is the diametral hyperplane of $E$ conjugate to the coordinate direction $\bm{e}_i$: any chord of $E$ parallel to the $x_i$-axis has its midpoint in this hyperplane. Conversely, any point of this hyperplane lying inside the ellipsoid is the midpoint of such a chord.
By Lemma~\ref{lem:inscribed}(c), this hyperplane is spanned by $\{T_j\}_{j\ne i}$.
In particular, hold $\Rm$ and also $\rv_{(-i)}$ fixed. As $P$ moves along the chord parallel to the $x_i$-axis, the standardized coefficient of $X_i$ is zero exactly at the midpoint. This midpoint is the zero of the ruler in Theorem~\ref{thm:partial} below.

\begin{figure}[t]
\centering
\includegraphics[width=0.98\textwidth]{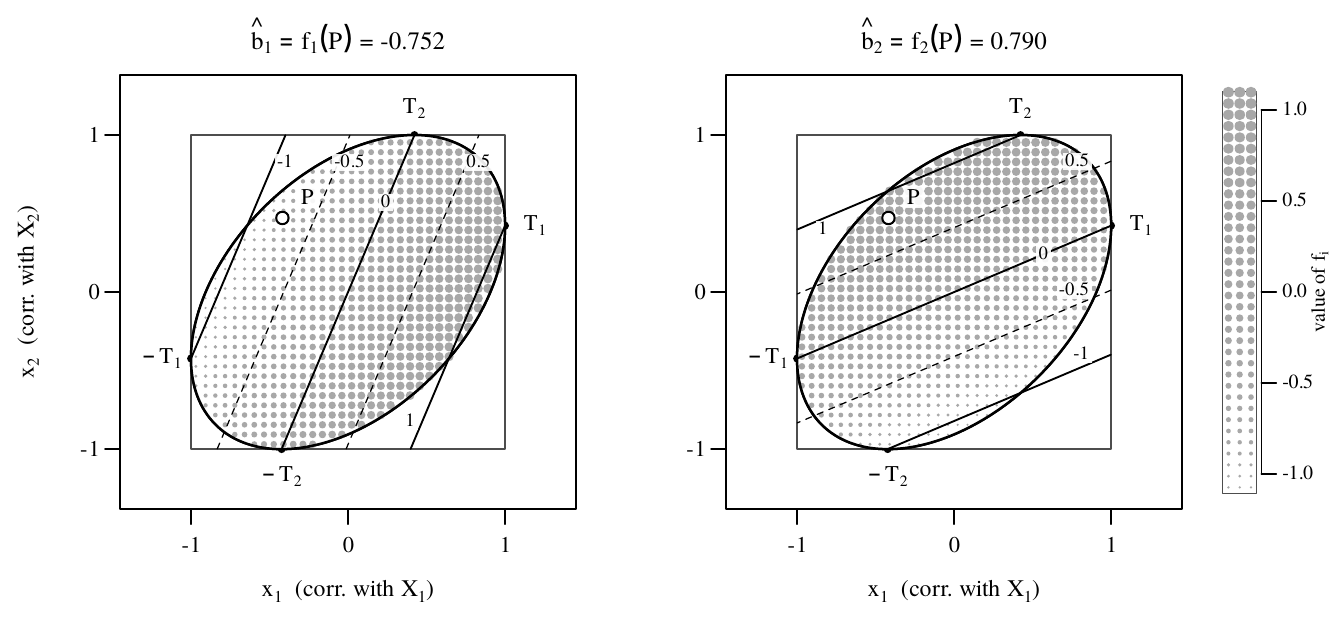}
\caption{Theorem~\ref{thm:coef} for Example~\ref{ex:baseball}. Left: the field $f_1$, with the labeled level lines $f_1(\bm{x})=-1,-0.5,0,0.5,1$; the solid lines $f_1(\bm{x})=-1,0,1$ pass through $-T_1$, $\pm T_2$ and $T_1$ respectively. Right: the field $f_2$. The field is rendered as a halftone inside $\bar E$: gray dots on a white ground vary in size, and their area---hence the covered fraction of the ground---grows linearly with the value of $f_i$, from $-\sqrt{\VIF_i}$ (almost bare) to $\sqrt{\VIF_i}$ (almost covered); see Prop.~\ref{prop:bounds}(c).}
\label{fig:field}
\end{figure}

\subsection{Partial correlation as a ruler reading on a chord}\label{sec:partial}

\begin{assumption}[chord of positive length]\label{ass:chord}
\begin{equation}\label{eq:chord}\tag{B$_i$}
\text{$\operatorname{length}(\ell_i\cap\bar E)>0$ for the chosen $i$.}
\end{equation}
\end{assumption}

\begin{theorem}[partial correlation by a ruler]\label{thm:partial}
Fix $i$ and assume \eqref{eq:chord}. Let $Q_i^{\pm}$ be the two endpoints of the chord $\ell_i\cap\bar E$, and let $g_i:\R\to\ell_i$ be the affine bijection with $g_i(\pm1)=Q_i^\pm$ (a ruler laid on the chord, graduated linearly from $-1$ at $Q_i^-$ to $+1$ at $Q_i^+$). Here, the $i$-th coordinate of $Q_i^+$ is larger than that of $Q_i^-$.

\lab{a} Put $v_i:=(\Rm^{-1})_{ii}=\bm{e}_i\tr\Rm^{-1}\bm{e}_i$ and $D_i:=(\bsh{i})^2+v_i(1-\rhoM^2)$. Then
\begin{equation}\label{eq:closed}
Q_i^\pm=P+t_i^{\pm} \bm{e}_i\quad\left(t_i^{\pm}:=\frac{-\bsh{i}\pm\sqrt{D_i}}{v_i}\right),\qquad
g_i^{-1}(P)=\frac{\bsh{i}}{\sqrt{D_i}}.
\end{equation}
\lab{b} 
\[
\frac{\bsh{i}}{\sqrt{D_i}}=\rhoP{i},\qquad\text{hence}\quad g_i^{-1}(P)=\rhoP{i}\quad\text{(partial correlation)}.
\]
\end{theorem}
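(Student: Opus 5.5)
For (a) I will substitute the parametrized point $P+t\bm{e}_i$ into the defining equation $\nR{\bm{x}}^2=1$ and solve the resulting quadratic in $t$. Expanding gives
\[
\nR{P+t\bm{e}_i}^2=\nR{P}^2+2t\,\bm{e}_i\tr\Rm^{-1}\rv+t^2\,\bm{e}_i\tr\Rm^{-1}\bm{e}_i=\rhoM^2+2t\,\bsh{i}+t^2v_i .
\]
Here I use Lemma~\ref{lem:inside}(a) for the constant term and \eqref{eq:normal}, or equivalently \eqref{eq:field}, for the linear coefficient. Setting this equal to $1$ gives $v_it^2+2\bsh{i}t-(1-\rhoM^2)=0$, whose roots are exactly $t_i^\pm=(-\bsh{i}\pm\sqrt{D_i})/v_i$.

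Since $v_i>0$ by \eqref{eq:pd}, the quantity $t$ measures the $i$-th coordinate increment, so $t_i^+$ belongs to $Q_i^+$. Assumption \eqref{eq:chord} gives two distinct roots, so $D_i>0$. (In fact $D_i\ge0$ automatically by Lemma~\ref{lem:inside}, and $D_i=0$ forces $\bsh{i}=0$ and $\rhoM=1$.)

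The ruler is affine along $\ell_i$, so $g_i^{-1}(P)$ is the position of $t=0$ on the affine scale sending $t_i^-\mapsto-1$ and $t_i^+\mapsto1$:
\[
g_i^{-1}(P)=\frac{0-(t_i^++t_i^-)/2}{(t_i^+-t_i^-)/2}=\frac{\bsh{i}/v_i}{\sqrt{D_i}/v_i}=\frac{\bsh{i}}{\sqrt{D_i}} .
\]

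For (b), assume $\Cm\succ0$; this is the standing hypothesis of (F3), and it is implied when $D_i>0$ fails only in the degenerate case. I will compute the partial correlation through \eqref{eq:partial-inverse} by writing $\Cm^{-1}$ via the partitioned-inverse formula with Schur complement $s:=1-\rv\tr\Rm^{-1}\rv=1-\rhoM^2>0$. This gives
\[
(\Cm^{-1})_{p+1,p+1}=1/s,\qquad (\Cm^{-1})_{i,p+1}=-(\Rm^{-1}\rv)_i/s=-\bsh{i}/s,\qquad (\Cm^{-1})_{ii}=v_i+(\bsh{i})^2/s .
\]
Substituting into \eqref{eq:partial-inverse},
\[
\rhoP{i}=\frac{\bsh{i}/s}{\sqrt{\bigl(v_i+(\bsh{i})^2/s\bigr)/s}}=\frac{\bsh{i}}{\sqrt{s v_i+(\bsh{i})^2}}=\frac{\bsh{i}}{\sqrt{D_i}},
\]
as required.

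The main obstacle is bookkeeping rather than ideas. I need the correct signs in the block inverse, in particular the sign of the off-diagonal block, $-\Rm^{-1}\rv/s$. I also need to confirm that (B$_i$) together with the hypotheses of (F3) is consistent. If $\rhoM=1$ then $D_i=(\bsh{i})^2$, and the ruler reading is $\pm1$ or undefined. So I will state explicitly that (b) is asserted under $\Cm\succ0$, where $D_i\ge v_i s>0$ and (B$_i$) holds automatically.
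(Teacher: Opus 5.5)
Your part (a) is the paper's argument and is correct, and your block-inverse computation in (b) is also correct wherever it applies. The gap is that it does not cover the whole theorem. Part (b) is stated under \eqref{eq:chord} alone, and \eqref{eq:chord} does not imply $\Cm\succ0$. The implication runs only the other way, since $D_i\ge v_i(1-\rhoM^2)$; so your claim that $\Cm\succ0$ follows is not a valid deduction.

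Consider the case $\rhoM=1$ with $\bsh{i}\ne0$. Then $D_i=(\bsh{i})^2>0$, so \eqref{eq:chord} holds and $P$ is an endpoint of the chord. The partial correlation is still defined in the sense of Definition~\ref{def:mpcorr}, because both residual variances are positive. But $\Cm^{-1}$ does not exist, so neither (F3) nor your Schur complement $s=1-\rhoM^2$ is available. You noticed this case and then removed it by restricting (b) to $\Cm\succ0$, so as written you prove a weaker statement than the one claimed.

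The missing case is easy to close. Use standardized variables with $\rhoM=1$. Then $Y=\ahs\tr X$ exactly: almost surely in the population setting, or as an identity among the standardized data columns in the sample setting. Taking residuals on $X_{(-i)}$, the residual of $Y$ is $\bsh{i}$ times the residual of $X_i$. That residual has variance $1/v_i>0$. Hence $\rhoP{i}=\sgn\bsh{i}=\bsh{i}/\sqrt{D_i}$.

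The paper avoids the case split by computing residuals directly. It identifies the residual of $X_i$ on $X_{(-i)}$ as $\bm{e}_i\tr\Rm^{-1}X/v_i$. It writes the residual of $Y$ as $\bsh{i}$ times this plus the full-model residual $\varepsilon$. It then reads off variances $1/v_i$ and $D_i/v_i$ and covariance $\bsh{i}/v_i$. This uses only $\Rm\succ0$ and $D_i>0$. Your inverse-matrix route is shorter when $\Cm\succ0$, but it depends on (F3) and on $\Cm$ being invertible, which is exactly what fails at the boundary.
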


\begin{proof}
\labfirst{a} Since $\bm{e}_i\tr\Rm^{-1}P=(\Rm^{-1}\rv)_i=\bsh{i}$ (\eqref{eq:normal} in (F1)) and $\nR{P}^2=\rhoM^2$ (Lemma~\ref{lem:inside}(a)),
\[
\nR{P+t\bm{e}_i}^2
=\nR{P}^2+2t\langle \bm{e}_i,P\rangle_{\Rm^{-1}}+t^2\nR{\bm{e}_i}^2
=\rhoM^2+2\bsh{i} t+v_i t^2\qquad\text{for }t\in\R.
\]
The roots of $\nR{P+t\bm{e}_i}^2=1$ are the $t_i^{\pm}$ of \eqref{eq:closed}, so the chord $\ell_i\cap\bar E$ has midpoint parameter $t_i^0:=-\bsh{i}/v_i$ and half-length $\sqrt{D_i}/v_i$.
\eqref{eq:chord} says $D_i\ne0$. Thus $g_i^{-1}(P+t\bm{e}_i)=(t-t_i^0)/(\sqrt{D_i}/v_i)$.
Putting $t=0$ establishes $g_i^{-1}(P)=\bsh{i}/\sqrt{D_i}$.

\lab{b} We identify this value as the partial correlation $\rhoP{i}$.
For this part of the proof only, replace each variable by its centered version divided by its standard deviation, retaining the notation $X_i$, $Y$. This leaves all correlations unchanged, and the regression coefficients of the transformed variables are $\bsh{i}$ (Definition~\ref{def:coefficients}).
Let $\widetilde X_i^{(-i)}$ be the residual of $X_i$ and $\widetilde Y^{(-i)}$ that of $Y$, both regressed on $X_{(-i)}$, and put $\varepsilon:=Y-\ahs\tr X$, the residual of $Y$ on $X$, which is uncorrelated with every $X_j$ and has variance $1-\rhoM^2$ by \eqref{eq:residual-variance} in (F2) (with $\sd(Y)=1$). Then
\[
\widetilde X_i^{(-i)}=\frac{\bm{e}_i\tr\Rm^{-1}X}{v_i},\qquad \widetilde Y^{(-i)}=\bsh{i}\,\widetilde X_i^{(-i)}+\varepsilon,\qquad \Cov(\widetilde X_i^{(-i)},\varepsilon)=0 .
\]
For the first identity, observe that
\[
\frac{\bm{e}_i\tr\Rm^{-1}X}{v_i}
=X_i+\sum_{j\ne i}\frac{(\Rm^{-1})_{ij}}{v_i}X_j.
\]
This is $X_i$ plus a linear combination of $X_{(-i)}$. Moreover, for $j\ne i$,
\[
\begin{aligned}
\Cov(\bm{e}_i\tr\Rm^{-1}X,X_j)
&=\bm{e}_i\tr\Rm^{-1}\Cov(X,X_j)\\
&=\bm{e}_i\tr\Rm^{-1}\Rm \bm{e}_j=\delta_{ij}=0.
\end{aligned}
\]
Thus $\bm{e}_i\tr\Rm^{-1}X/v_i$ is uncorrelated with each component of $X_{(-i)}$. By the orthogonality characterization and uniqueness of the least-squares residual, it equals $\widetilde X_i^{(-i)}$.

Taking residuals on $X_{(-i)}$ in $Y=\bsh{i}X_i+\sum_{j\ne i}\bsh{j}X_j+\varepsilon$ gives the second identity: the residual of each $X_j$ ($j\ne i$) is zero, and that of $\varepsilon$ is $\varepsilon$ itself. To compute their correlation, we obtain
\begin{equation}\label{eq:residual-moments}
\Var(\widetilde X_i^{(-i)})=\frac{1}{v_i},\qquad \Cov(\widetilde X_i^{(-i)},\widetilde Y^{(-i)})=\frac{\bsh{i}}{v_i},\qquad \Var(\widetilde Y^{(-i)})=\frac{(\bsh{i})^2}{v_i}+1-\rhoM^2=\frac{D_i}{v_i} ,
\end{equation}
the first by $\Var(\bm{e}_i\tr\Rm^{-1}X)=\bm{e}_i\tr\Rm^{-1}\Rm\Rm^{-1}\bm{e}_i=v_i$. Both variances are positive by \eqref{eq:pd} and \eqref{eq:chord}, respectively, so $\rhoP{i}$ is defined, and
\[
\rhoP{i}=\frac{\bsh{i}/v_i}{\sqrt{(1/v_i)(D_i/v_i)}}=\frac{\bsh{i}}{\sqrt{D_i}}=g_i^{-1}(P). \qedhere
\]
\end{proof}

\begin{figure}[t]
\centering
\includegraphics[width=0.98\textwidth]{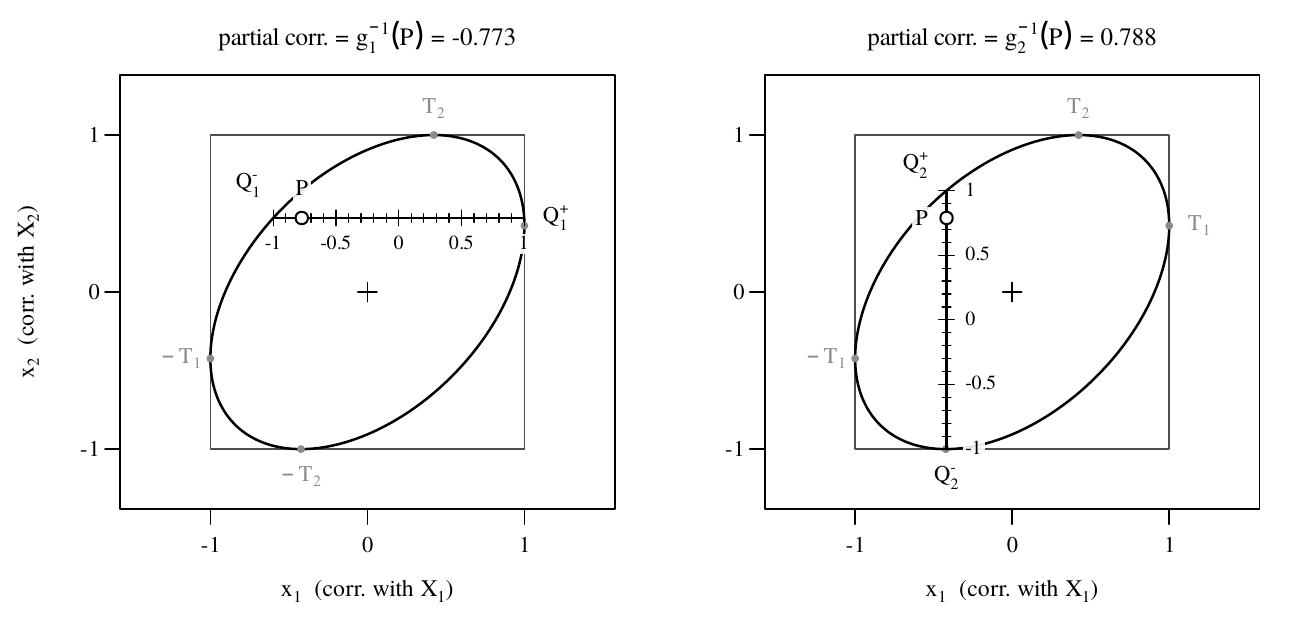}
\caption{Theorem~\ref{thm:partial} for Example~\ref{ex:baseball}. Each panel shows the chord of $E$ through $P$ parallel to the $x_1$-axis (left) or the $x_2$-axis (right); the chord carries a ruler graduated from $-1$ to $1$, and the reading at $P$ is the partial correlation. That $Q_1^+$ lies close to $T_1$ and $Q_2^-$ very close to $-T_2$ is a coincidence of this data set ($\rho_2\approx\rho_{12}$ and $\rho_1\approx-\rho_{12}$). The labels $T_1$ (left) and $-T_2$ (right) are omitted because they would overlap those of $Q_1^+$ and $Q_2^-$; $\pm T_1$ and $\pm T_2$ are drawn in gray.}
\label{fig:ruler}
\end{figure}

Figure~\ref{fig:ruler} illustrates the reading.
The squared reading has a variance interpretation: $1-\rhoM^2=(1-\rho^2_{X_{(-i)}\vdash Y})(1-\rhoP{i}^2)$ \citep[Eq.~(3.63), pp.~80--81]{TakeuchiYanai1972}.

\begin{remark}[chord $Q_i^-Q_i^+$]\label{rem:feasible}
Fix $\Rm$ and $\rv_{(-i)}$ while varying $\rho_i$. Since $\Rm\succ0$, $\Cm\succeq0$ iff $\rv\tr\Rm^{-1}\rv\le1$. Thus the admissible values of $\rho_i$ correspond to the chord $Q_i^-Q_i^+$. Its midpoint is at $\rho_i=\rv_{(-i)}\tr\Rm_{(-i)(-i)}^{-1}\Rm_{(-i)i}$, and its half-length is the denominator of \eqref{eq:partial-classical} in (F3). Under assumption \eqref{eq:chord}, Theorem~\ref{thm:partial} shows that partial correlation increases linearly from $-1$ to $+1$ along the chord. Feasibility here concerns only the correlation matrix; fixing the marginal distributions may restrict the interval further.
\end{remark}

\section{Consequences}\label{sec:consequences}

\subsection{Signs, suppression and sign reversal}

\begin{proposition}[signs and sign reversal]\label{prop:signs}
Assume \eqref{eq:chord}:
\lab{a} $\sgn\bsh{i}=\sgn\rhoP{i}=\sgn f_i(P)$ (and $\sgn\hat b_i=\sgn\bsh{i}$ by Definition~\ref{def:coefficients}), whereas $\sgn\rho_i=\sgn x_i(P)$.
\lab{b} $\rho_i\bsh{i}<0$ iff $P$ lies in $\{\bm{x} : x_if_i(\bm{x})<0\}$.
\lab{c} The hyperplanes $\{\bm{x} : x_i=0\}$ and $\{\bm{x} : f_i(\bm{x})=0\}$ coincide iff $X_i$ is uncorrelated with every component of $X_{(-i)}$, so that $f_i(\bm{x})=x_i$.
\lab{d} For $p=2$ and $i=1$, the signs disagree iff $0<\rho_1<\rho_{12}\rho_2$ or $\rho_{12}\rho_2<\rho_1<0$.
\end{proposition}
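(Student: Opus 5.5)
Each part reduces to results already proved, chiefly Theorems~\ref{thm:coef} and~\ref{thm:partial}. I take them in order.

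For (a), Theorem~\ref{thm:coef}(b) gives $f_i(P)=\bsh{i}$, so $\sgn\bsh{i}=\sgn f_i(P)$. Under \eqref{eq:chord}, $D_i>0$, and Theorem~\ref{thm:partial}(b) gives $\rhoP{i}=\bsh{i}/\sqrt{D_i}$. A positive denominator preserves sign, so $\sgn\rhoP{i}=\sgn\bsh{i}$. Since $\sd(X_i)/\sd(Y)>0$, Definition~\ref{def:coefficients} gives $\sgn\hat b_i=\sgn\bsh{i}$. Finally, $x_i(P)=\rho_i$ by the definition of $P=\rv$.

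Part (b) is then immediate. By (a), $\rho_i\bsh{i}=x_i(P)\,f_i(P)$, so this product is negative exactly when $P$ lies in $\{\bm{x} : x_if_i(\bm{x})<0\}$.

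For (c), I use \eqref{eq:field}, which gives $f_i(\bm{x})=\bm{e}_i\tr\Rm^{-1}\bm{x}$. The two hyperplanes through $O$ coincide iff their normals are parallel, that is, iff $\Rm^{-1}\bm{e}_i=c\,\bm{e}_i$ for some scalar $c$. This is equivalent to $\bm{e}_i=c\Rm\bm{e}_i=cT_i$. Comparing the $i$-th entries, where $\rho_{ii}=1$, forces $c=1$, and then $T_i=\bm{e}_i$, which says exactly that $\rho_{ji}=0$ for all $j\ne i$. Conversely, if $T_i=\bm{e}_i$, then $f_i(\bm{x})=\bm{e}_i\tr\Rm^{-1}\bm{x}=(\Rm^{-1}\bm{e}_i)\tr\bm{x}=\bm{e}_i\tr\bm{x}=x_i$, using the symmetry of $\Rm^{-1}$.

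For (d), I compute directly with $p=2$. There $\Rm^{-1}=(1-\rho_{12}^2)^{-1}\begin{pmatrix}1&-\rho_{12}\\-\rho_{12}&1\end{pmatrix}$, so $\bsh{1}=(\rho_1-\rho_{12}\rho_2)/(1-\rho_{12}^2)$, with positive denominator by \eqref{eq:pd}. The signs disagree iff $\rho_1(\rho_1-\rho_{12}\rho_2)<0$. Reading this as a product of two factors of opposite sign gives either $\rho_1>0$ and $\rho_1<\rho_{12}\rho_2$, or $\rho_1<0$ and $\rho_1>\rho_{12}\rho_2$. These are the two stated ranges.

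The main point needing care is the meaning of ``the signs disagree'' in (d). I read it as strict opposite signs, $\rho_i\bsh{i}<0$, matching (b). With that reading, cases where $\rho_1=0$ or $\bsh{1}=0$ are excluded on both sides of the equivalence, so they need no separate treatment. Part (c) needs only the small normalization step $c=1$ noted above; everything else is bookkeeping.
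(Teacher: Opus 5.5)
Your proposal is correct and follows essentially the same route as the paper. Parts (a) and (b) come from $f_i(P)=\bsh{i}$ (Theorem~\ref{thm:coef}(b)) and from $\rhoP{i}=\bsh{i}/\sqrt{D_i}$ with $D_i>0$ (Theorem~\ref{thm:partial}). Part (c) comes from requiring $\Rm^{-1}\bm{e}_i$ to be parallel to $\bm{e}_i$ and normalizing with the unit diagonal, and part (d) from the explicit $p=2$ formula $\bsh{1}=(\rho_1-\rho_{12}\rho_2)/(1-\rho_{12}^2)$. Your reading of ``the signs disagree'' as strict opposite signs also matches the paper's remark that $\rho_1=0$ with $\bsh{1}\ne0$ is not counted as a sign reversal.
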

\begin{proof}
\labfirst{a} By Theorem~\ref{thm:coef}(b), $f_i(P)=\bsh{i}$; by Theorem~\ref{thm:partial}(a),(b), $\rhoP{i}=\bsh{i}/\sqrt{D_i}$ with $\sqrt{D_i}>0$ under \eqref{eq:chord}.
\lab{b} Theorem~\ref{thm:coef}(b) gives the sign-reversal criterion.
\lab{c} The zero set of $f_i$ is $\{x_i=0\}$ iff $\Rm^{-1}\bm{e}_i$ is a multiple of $\bm{e}_i$, i.e.\ iff $\Rm \bm{e}_i=c\bm{e}_i$, and then $c=\bm{e}_i\tr\Rm \bm{e}_i=1$.
\lab{d} Follows from $f_1(\bm{x})=(x_1-\rho_{12}x_2)/(1-\rho_{12}^2)$ for $p=2$.
\end{proof}

Figure~\ref{fig:wedge} shows this region of sign reversal (for $i=2$ in the configuration of Example~\ref{ex:artificial}), a pattern discussed under the heading of suppression \citep[Sects.~2--3]{FriedmanWall2005}. (When $\rho_1=0$ but $\rho_{12}\rho_2\ne0$, $\bsh{1}=-\rho_{12}\rho_2/(1-\rho_{12}^2)\ne0$ and $P$ lies on the boundary $x_1=0$: a nonzero coefficient, but not a sign reversal.)
For $p=2$ the wedges are narrow when $|\rho_{12}|$ is small.

\begin{figure}[t]
\centering
\includegraphics[width=0.46\textwidth]{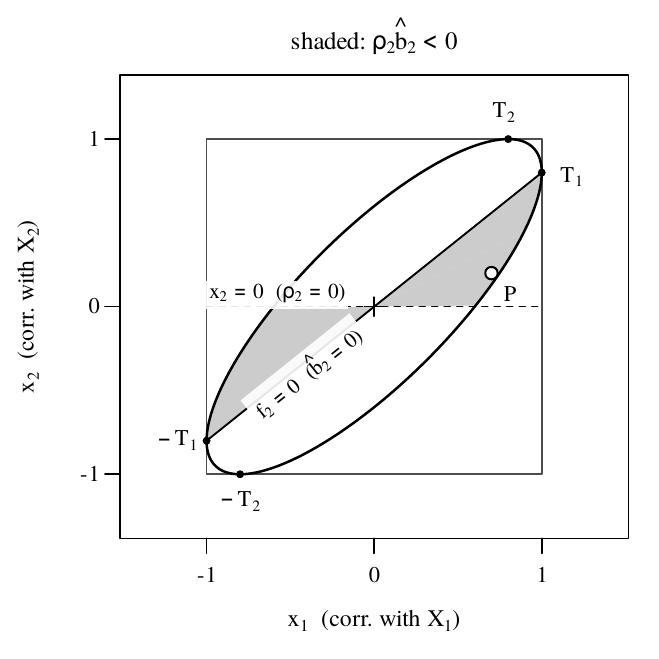}
\caption{Proposition~\ref{prop:signs}: the two opposite wedges (shaded inside $\bar E$) between $x_{2}=0$ and $f_{2}(\bm{x})=0$ in which the simple correlation $\rho_{2}$ and the standardized coefficient $\bsh{2}$ (equivalently the partial correlation) have strictly opposite signs, $\rho_{2}\bsh{2}<0$, drawn for Example~\ref{ex:artificial} ($\rho_{12}=0.8$, $\rv=(0.7,0.2)\tr$), in which $P$ lies inside one of them. Figures~\ref{fig:config} to~\ref{fig:ruler} use Example~\ref{ex:baseball}.}
\label{fig:wedge}
\end{figure}

\subsection{Bounds, variance inflation and the shape of $E$}

\begin{proposition}[VIF and coefficient bounds]\label{prop:bounds}
Put $\VIF_i:=(\Rm^{-1})_{ii}$ (cf.\ Theorem~\ref{thm:partial}):
\lab{a} The variance inflation factor of $X_i$ is
\begin{equation}\label{eq:vif}
\VIF_i=\frac{1}{1-\rho^2_{X_{(-i)}\vdash X_i}}.
\end{equation}
\lab{b} The chord of $E$ through $O$ parallel to the $x_i$-axis has
\[
\text{half-length}=1/\sqrt{\VIF_i},\qquad\text{(VIF)}
\]
whereas the orthogonal projection (shadow) of $\bar E$ onto the $x_i$-axis is $[-1,1]$.
\lab{c} $\max_{\bm{x}\in E}f_i(\bm{x})=\max_{\bm{x}\in E}|f_i(\bm{x})|=\sqrt{\VIF_i}$, and
\[
|\bsh{i}|\le\rhoM\sqrt{\VIF_i}\le\sqrt{\VIF_i},\qquad |\bsh{i}|\le\frac{\rhoM}{\sqrt{1-\rho_{12}^2}}\ \ (\text{if } p=2).
\]
\end{proposition}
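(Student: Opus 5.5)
For (a) I would use a Schur-complement argument. Partition $\Rm$ with row and column $i$ set apart. The block-inverse formula gives
\[
(\Rm^{-1})_{ii}=\bigl(1-\Rm_{(-i)i}\tr\Rm_{(-i)(-i)}^{-1}\Rm_{(-i)i}\bigr)^{-1}.
\]
By (F2), applied with $X_i$ as the response and $X_{(-i)}$ as the explanatory variables, the subtracted quadratic form equals $\rho^2_{X_{(-i)}\vdash X_i}$, which gives \eqref{eq:vif}. Positivity of the denominator follows from \eqref{eq:pd}. The same identity also follows from the proof of Theorem~\ref{thm:partial}(b): there $\Var(\widetilde X_i^{(-i)})=1/v_i$, and by \eqref{eq:residual-variance} this residual variance is $1-\rho^2_{X_{(-i)}\vdash X_i}$. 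I would mention this route as a cross-check.

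For (b) I would repeat the computation in Theorem~\ref{thm:partial}(a) with $P$ replaced by $O$. We have $\nR{t\bm{e}_i}^2=v_it^2$, so the chord $\{t\bm{e}_i\}\cap\bar E$ is $|t|\le 1/\sqrt{v_i}=1/\sqrt{\VIF_i}$. For the shadow, Lemma~\ref{lem:inscribed}(a) gives $|x_i|\le1$ on $\bar E$; scaling handles interior points, since $\bar E\subset U$. Equality is attained at $\pm T_i$, and $\bar E$ is convex, so the projection onto the $x_i$-axis is exactly $[-1,1]$.

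For (c) I would use the representation \eqref{eq:field}, $f_i(\bm{x})=\langle\bm{e}_i,\bm{x}\rangle_{\Rm^{-1}}$, and apply Cauchy--Schwarz in $\langle\cdot,\cdot\rangle_{\Rm^{-1}}$. This gives $|f_i(\bm{x})|\le\nR{\bm{e}_i}\,\nR{\bm{x}}=\sqrt{v_i}$ on $E$. Equality holds at $\bm{x}=\pm\bm{e}_i/\sqrt{v_i}\in E$, where $f_i$ takes the values $\pm\sqrt{v_i}$. Because $E$ is symmetric about $O$, the maxima of $f_i$ and $|f_i|$ coincide. Evaluating at $P$ and using Theorem~\ref{thm:coef}(b) together with Lemma~\ref{lem:inside}(a) gives $|\bsh{i}|=|f_i(P)|\le\sqrt{\VIF_i}\,\nR{P}=\rhoM\sqrt{\VIF_i}\le\sqrt{\VIF_i}$. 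For $p=2$, part (a) gives $\rho^2_{X_2\vdash X_1}=\rho_{12}^2$, so $\VIF_1=\VIF_2=1/(1-\rho_{12}^2)$. One could also read this directly off the inverse of the $2\times2$ matrix. This yields the last bound.

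No step looks hard. The only point needing care is (a): the inverse-of-a-block identity has to be matched exactly with the definition of multiple correlation via \eqref{eq:multiple}, with the roles of response and regressors swapped. I would also check that the diagonal entry $1$ of $\Rm$ plays the role of $\sd(X_i)^2=1$ in the standardized setting.
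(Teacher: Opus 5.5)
Your proof is correct. Parts (b) and (c) follow the paper's argument: the quadratic $\nR{t\bm{e}_i}^2=\VIF_i\,t^2$, then convexity together with Lemma~\ref{lem:inscribed}(a) for the shadow, then Cauchy--Schwarz for $f_i$. You streamline two small points. You apply Cauchy--Schwarz directly at $P$, so the paper's detour through $P=\rhoM P'$ and its separate case $P=O$ become unnecessary. You also spell out the $p=2$ bound via $\VIF_1=\VIF_2=1/(1-\rho_{12}^2)$, which the paper leaves implicit.

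The genuine difference is in (a). Your main route is pure matrix algebra: the Schur-complement formula for $(\Rm^{-1})_{ii}$, combined with \eqref{eq:multiple} for the auxiliary regression of $X_i$ on $X_{(-i)}$. This is legitimate because $\Rm_{(-i)(-i)}$ is a principal submatrix of $\Rm\succ0$. The paper uses the route you list only as a cross-check. It reuses $\Var(\widetilde X_i^{(-i)})=1/v_i$ from the proof of Theorem~\ref{thm:partial}(b), notes that this step does not need \eqref{eq:chord}, and equates it with \eqref{eq:residual-variance} for the same auxiliary regression.

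Your version is self-contained and needs no residual random variables. The paper's version shows something extra: the central-chord half-length $1/\sqrt{\VIF_i}$ in (b) is exactly the standard deviation of the residual $\widetilde X_i^{(-i)}$ of the standardized $X_i$ regressed on $X_{(-i)}$.

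In either route, add a line for $p=1$. There $X_{(-i)}$ is empty and the block formula degenerates, but both sides of \eqref{eq:vif} equal $1$ by the paper's conventions.
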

\noindent\eqref{eq:vif} is given in \citet[Cor.~5.8.1]{Whittaker1990} and in \citet[p.~85, Exercise~(5)]{TakeuchiYanai1972}.
\begin{proof}
\labfirst{a} Let $\widetilde X_i^{(-i)}$ be the residual of $X_i$ regressed on $X_{(-i)}$ (with an intercept). The first equality in \eqref{eq:residual-moments} in the proof of Theorem~\ref{thm:partial}, applied to the standardized explanatory variables, gives $\Var(\widetilde X_i^{(-i)})=\sd(X_i)^2/\VIF_i$ after restoring the scale; this identity does not require the chord assumption \eqref{eq:chord}. By \eqref{eq:residual-variance} in (F2), applied to the regression of $X_i$ on $X_{(-i)}$, the same variance is $\sd(X_i)^2(1-\rho^2_{X_{(-i)}\vdash X_i})$. Dividing by $\sd(X_i)^2>0$ proves \eqref{eq:vif}. For $p=1$, both sides of \eqref{eq:vif} equal $1$ with the conventions of Section~\ref{sec:notation}.

\lab{b} Since $\nR{t\bm{e}_i}^2=t^2\bm{e}_i\tr\Rm^{-1}\bm{e}_i=\VIF_i\,t^2$ for every $t\in\R$, the point $t\bm{e}_i$ lies on $E$ iff $t=\pm1/\sqrt{\VIF_i}$. The projection is $[-1,1]$ by the convexity of $\bar E$ and Lemma~\ref{lem:inscribed}(a).
\lab{c} By Cauchy--Schwarz and Theorem~\ref{thm:coef}, for $\bm{x}\in E$,
\[
|f_i(\bm{x})|=|\langle \bm{e}_i,\bm{x}\rangle_{\Rm^{-1}}|
\le\|\bm{e}_i\|_{\Rm^{-1}}\|\bm{x}\|_{\Rm^{-1}}
=\|\bm{e}_i\|_{\Rm^{-1}}
=\sqrt{\VIF_i}.
\]
At $\bm{x}=\bm{e}_i/\sqrt{\VIF_i}\in E$, we have $f_i(\bm{x})=\sqrt{\VIF_i}$, proving both maxima.
If $P\ne O$, write $P=\rhoM P'$ with $P'\in E$ (Theorem~\ref{thm:multiple}); then, by linearity, $|\bsh{i}|=|f_i(P)|=\rhoM|f_i(P')|\le\rhoM\sqrt{\VIF_i}$. For $P=O$, there is nothing to prove.
\end{proof}

A small eigenvalue of $\Rm$ makes $E$ thin in the corresponding eigenvector direction while $E$ still touches all $2p$ faces. For an explanatory variable with large $\VIF_i$, the chord through the center parallel to the $x_i$-axis is short, and the level hyperplanes of $f_i$ are closely spaced. When $P$ points along a particularly thin direction of $E$, it can lie close to $E$ even though all its coordinates are small, giving a multiple correlation close to one, as in the example after Theorem~\ref{thm:multiple}.

\section{Worked examples}\label{sec:example}

\begin{example}[league standings: rank, runs scored and runs allowed]\label{ex:baseball}
Let $Y$ be a team's final rank ($1=$ champion), $X_1$ its season total of runs scored, and $X_2$ its season total of runs allowed in Japan's six-team Central League.
The data were transcribed by the author from \emph{The Official Baseball Encyclopedia 2004} (in Japanese) \citep{NPBEncyclopedia2004}.
From 1954--2003, we use the 31 seasons in which every team played 130 games, so that season totals are comparable, giving 186 team--season observations.
The observations are listed in Appendix~\ref{app:data}; the data and R script are supplied as ancillary files.
The pooled Pearson correlations, rounded to three decimals, are $\rho_1=-0.418$, $\rho_2=0.472$, and $\rho_{12}=0.423$.
Figures~\ref{fig:config} to~\ref{fig:ruler} and the following table use these rounded values.
\begin{center}
\begin{tabular}{lcc}\toprule
quantity & geometric reading & algebra (F1)--(F3)\\ \midrule
$\rhoM$ (Thm.~\ref{thm:multiple}) & $|OP|/|OP'|$ & $0.829$\\
$\bsh{1},\bsh{2}$ (Thm.~\ref{thm:coef}) & $f_1(P),f_2(P)$ & $-0.752,\ 0.790$\\
$\rhoP1,\rhoP2$ (Thm.~\ref{thm:partial}) & $g_1^{-1}(P),g_2^{-1}(P)$ & $-0.773,\ 0.788$\\
$1/\sqrt{\VIF_i}$, $i=1,2$ (Prop.~\ref{prop:bounds}) & half-length of the chord through $O$ & $0.906$\\ \bottomrule
\end{tabular}
\end{center}
The diagram shows why the multiple correlation ($0.83$) is so much larger than either simple correlation ($0.42$, $0.47$): $P$ points nearly along the short axis of $E$, the direction ``runs scored minus runs allowed'', i.e.\ the run differential.
\end{example}

\begin{example}[an artificial configuration with sign reversal and a coefficient exceeding one]\label{ex:artificial}
Take $\rho_{12}=0.8$ and $\rv=(0.7,0.2)\tr$, the configuration of Section~\ref{sec:theorems} (the full correlation matrix is positive definite, with determinant $0.054$).
By \eqref{eq:normal} in (F1) and \eqref{eq:multiple} in (F2),
\[
\ahs=\frac{1}{1-0.8^2}
\begin{pmatrix}0.7-0.8\times0.2\\0.2-0.8\times0.7\end{pmatrix}
=\begin{pmatrix}1.5\\-1\end{pmatrix},
\qquad
\rhoM^2=0.7\times1.5+0.2\times(-1)=0.85.
\]
Then $\rhoM=\sqrt{0.85}\approx0.922$ and $(\rhoP1,\rhoP2)=(\sqrt{27/32},-\sqrt{12/17})\approx(0.919,-0.840)$: the standardized coefficient of $X_1$ exceeds $1$, and the standardized coefficient and partial correlation of $X_2$ have the opposite sign to $\rho_2=0.2$.
$E$ is thin (semi-axes $\sqrt{1.8}$ and $\sqrt{0.2}$).
$P$ lies outside the strip $|f_1(\bm{x})|\le1$ between the level lines through $\pm T_1$ (Theorem~\ref{thm:coef}), and inside the sign-reversal wedge for $i=2$ shaded in Figure~\ref{fig:wedge}, since $\rho_2=0.2$ is strictly between $0$ and $\rho_{12}\rho_1=0.56$ (Proposition~\ref{prop:signs}).
Here $\VIF_1=\VIF_2=1/(1-0.8^2)=25/9$: the chords of $E$ through $O$ parallel to the axes have half-length $1/\sqrt{\VIF_i}$$=0.6$ although $E$ still touches all four sides of the square. Moreover, the bound $|\bsh{i}|\le\rhoM\sqrt{\VIF_i}=\sqrt{85}/6\approx1.537$ of Proposition~\ref{prop:bounds}(c) is nearly attained by $\bsh{1}=1.5$.
\end{example}

The readings in Theorems~\ref{thm:multiple}, \ref{thm:coef} and~\ref{thm:partial} and the central-chord/VIF identity in Proposition~\ref{prop:bounds}(b) and~(a) were also checked numerically against (F1)--(F3) on 1\,000 random positive definite correlation matrices for $p\in\{2,3,4,6,10\}$; a shortened script for Theorems~\ref{thm:multiple}, \ref{thm:coef} and~\ref{thm:partial} is reproduced in Appendix~\ref{app:code}, and the full script is provided as an ancillary file.

\section{Relation to known elliptical constructions}\label{sec:related}

We now compare the configuration with the elliptical and vector-geometric constructions already in use. The ellipsoid $E$ is the classical data ellipsoid of the standardized explanatory variables. Furthermore, enhancement and suppression, feasible correlations, and the vector geometry of regression are related to the configuration.

\paragraph{The ellipsoid $E$ is the data ellipsoid of the standardized explanatory variables.}
For standardized variables the ellipsoid of concentration of Dempster \citep[Sect.~7.3, Eqs.~(7.3.2)--(7.3.3)]{Dempster1969} is $\{\bm{x} : \nR{\bm{x}}\le1\}=\bar E$; Cram\'er's original definition \citep[Sect.~22.7, Eq.~(22.7.1)]{Cramer1946} scales it by $\sqrt{p+2}$, so that the uniform distribution on the solid has the given second moments.
Its shadows on the coordinate axes are $[-1,1]$ by Dempster's shadow theorem \citep[Thm.~7.3]{Dempster1969}, in the words of \citet[Fig.~2 and Sect.~3]{FriendlyMonetteFox2013}, its ``shadows (projections) on the coordinate axes give standard intervals'' $\bar x\pm s_x$, $\bar y\pm s_y$, which for standardized variables is Lemma~\ref{lem:inscribed}(a).
For $p=2$ the same ellipse is, up to translation and uniform scaling, the correlation glyph of Murdoch and Chow \citep{MurdochChow1996}.
Lemma~\ref{lem:inscribed} is therefore a statement of known facts in a form (uniqueness, tangency points as columns of $\Rm$, the coordinate hyperplanes as diametral hyperplanes) convenient for what follows.
It differs from the ``conjugate containment'' of Dempster \citep[Sect.~3.2, pp.~38--39]{Dempster1969}, in which an ellipsoid touches each face of a parallelotope at the center of the face, and the points of contact form a set of conjugate semi-axes: here the cube is the axis-parallel bounding box of $E$, and its points of contact $T_1,\dots,T_p$ are conjugate to the coordinate directions (Lemma~\ref{lem:inscribed}(c)) but in general not to one another.

\paragraph{What is new is the point $P$ and the three readings.}
The data ellipsoid lies in the space of the predictors and is used to display the predictors' own second moments.
Here the vector of correlations with the response is placed \emph{as a point} in that same space, and Theorems~\ref{thm:multiple}, \ref{thm:coef} and~\ref{thm:partial} show that the relative position of this point, the ellipsoid and the cube encodes the three quantities considered here.
Algebraically, Theorem~\ref{thm:multiple} says that $\rhoM$ is the $\Rm^{-1}$-norm of $\rv$ (formula \eqref{eq:multiple} in (F2) read as a Mahalanobis length).
Theorem~\ref{thm:coef} says that the normal equations $\Rm\ahs=\rv$ express $\rv$ in the basis of columns of $\Rm$.
Theorem~\ref{thm:partial} gives the identity $\rhoP{i}=\bsh{i}/\sqrt{(\bsh{i})^2+(\Rm^{-1})_{ii}(1-\rhoM^2)}$, which is equivalent to \eqref{eq:partial-classical} in (F3) but, in this form, describes the position of $P$ along a chord (Remark~\ref{rem:feasible}).
We have not found the placement of $\rv$ as a point in the data ellipsoid of the explanatory variables, nor the three readings from that one figure, in the elliptical methods of \citep{Dempster1969,FriendlyMonetteFox2013}, in the vector-geometric treatments \citep{SavilleWood1991,Wickens1995,ONeill2019}, in the coefficient-space construction of Waller \citep{Waller2011}, or elsewhere in the literature we have examined.

\paragraph{Enhancement and suppression.}
For fixed $\Rm$ and fixed $R^2$, Waller \citep[Sects.~2--3]{Waller2011} describes enhancement, $R^2>\rv\tr\rv$, in coefficient space, where its boundary is the intersection of two ellipsoids.
The linear map $\bm{b}\mapsto\Rm\bm{b}$ carries $\{\bm{b} : \bm{b}\tr\Rm\bm{b}=c\}$ to $\sqrt c\,E$ and $\{\bm{b} : \bm{b}\tr\Rm^2\bm{b}=c\}$ to the sphere of radius $\sqrt c$; with $c=R^2$ these are $\rhoM E$, the homothetic copy of $E$ through $P$ (Section~\ref{sec:theorems}), and the sphere of radius $\rhoM$, so enhancement means $|OP|<\rhoM$, i.e.\ $|OP'|<1$.
Friedman and Wall \citep[Sect.~3]{FriedmanWall2005} plot $R^2$ and the coefficients against $\rho_{12}$ for fixed $\rho_1,\rho_2$; Proposition~\ref{prop:signs} shows the same phenomenon in the fixed-$\rho_{12}$ diagram, as a wedge in the position of $P$.

\paragraph{Feasible correlations.} Remark~\ref{rem:feasible} reads the admissible range of $\rho_i$ as the chord $Q_i^-Q_i^+$. Kiesl and R\"assler \citep[Sect.~4]{KieslRassler2006} describe feasible correlation vectors using an ellipsoid when some variables are never jointly observed.

\paragraph{The vector geometry of regression.}
The vector geometry in $\R^n$ \citep[Chap.~3]{TakeuchiYanai1972}, \citep{SavilleWood1991,Wickens1995,ONeill2019}, and in abstract form Dempster's variable space \citep[Sect.~8.3]{Dempster1969}, represents each variable by a vector and correlations by cosines (the multiple and partial correlations are the cosines in Eqs.~(3.41) and~(3.50) of Takeuchi and Yanai); it explains the same formulas by projections in the space spanned by the $p+1$ variable vectors, and encodes correlations as angles between vectors rather than, as here, as coordinates of points in a fixed cube.

\section{Conclusion}\label{sec:discussion}
A single ellipsoid and a point encode the multiple correlation, standardized regression coefficients and partial correlations. This geometric representation also makes sign reversal, coefficient bounds and variance inflation visible.

\section*{Acknowledgments}
The author is grateful to the library of the Institute of Statistical Mathematics, whose rich collection made the references used here readily available.

\appendix
\setcounter{table}{0}\renewcommand{\thetable}{\Alph{section}.\arabic{table}}

\section{Sources of the classical formulas}\label{app:sources}
Table~\ref{tab:sources} summarizes sources for the classical relations used in this paper, ordered by the publication year of the cited edition.
$\circ$: an equivalent relation, possibly after standardization or rearrangement;
$\triangle$: a geometric interpretation;
---: no specific passage is listed.
\begin{table}[ht]
\centering\small
\caption{Sources of the classical relations of Section~\ref{sec:notation}.}\label{tab:sources}
\setlength{\tabcolsep}{4pt}
\begin{tabular}{@{}>{\raggedright\arraybackslash}p{5.3cm}lccccc@{}}\toprule
Formula & Eq. & \shortstack[c]{Takeuchi and\\Yanai (1972)} & \shortstack[c]{Rao\\(1973)} & \shortstack[c]{\makebox[1.3cm][c]{Whittaker}\\(1990)} & \shortstack[c]{Anderson\\(2003)} & \shortstack[c]{Friendly\\et al.\ (2013)}\\ \midrule
Standardized regression coefficients & \eqref{eq:normal} & $\circ$ & $\circ$ & --- & $\circ$ & $\circ$\\
Multiple correlation & \eqref{eq:multiple} & $\circ$ & $\circ$ & --- & $\circ$ & ---\\
Residual variance & \eqref{eq:residual-variance} & $\circ$ & $\circ$ & --- & $\circ$ & ---\\
Determinant identity & \eqref{eq:det} & $\circ$ & --- & $\circ$ & $\circ$ & ---\\
Partial corr., residual form & \eqref{eq:partial-classical} & $\circ$ & --- & --- & $\circ$ & $\triangle$\\
Partial corr., inverse-matrix form & \eqref{eq:partial-inverse} & $\circ$ & --- & $\circ$ & --- & ---\\
Variance inflation factor & \eqref{eq:vif} & $\circ$ & --- & $\circ$ & --- & $\triangle$\\ \bottomrule
\end{tabular}
\end{table}

\noindent The corresponding passages are:
\begin{itemize}
\item \citet{TakeuchiYanai1972}: \eqref{eq:normal}: (3.23); \eqref{eq:multiple}: the display after (3.42), p.~73; \eqref{eq:residual-variance}: (3.45), p.~74; \eqref{eq:det}: (3.44), rearranged; \eqref{eq:partial-classical}: (3.51), with $z=X_i$; \eqref{eq:partial-inverse}: (3.56), in cofactor form; \eqref{eq:vif}: Exercise~(5), p.~85.
\item \citet{Rao1973}: \eqref{eq:normal}: (4g.1.8); \eqref{eq:multiple}: (4g.1.11); \eqref{eq:residual-variance}: (4g.1.10)--(4g.1.11), p.~266; the residual definition of partial correlation: (4g.2.9), p.~270.
\item \citet{Whittaker1990}: \eqref{eq:det}: Prop.~5.9.1; \eqref{eq:partial-inverse}: Cor.~5.8.2; \eqref{eq:vif}: Cor.~5.8.1.
\item \citet{Anderson2003}: \eqref{eq:normal}: Def.~2.5.1 and the remark after Thm.~2.5.3; \eqref{eq:multiple}: Def.~2.5.4 and Eq.~(15); \eqref{eq:residual-variance}: Eq.~(19), p.~38; \eqref{eq:det}: Eqs.~(16)--(19); \eqref{eq:partial-classical}: Def.~2.5.2 and Eq.~(18).
\item Friendly, Monette and Fox \citep{FriendlyMonetteFox2013}: \eqref{eq:normal}: Eq.~(19), for centered predictors scaled to unit length, gives \eqref{eq:normal} when the centered response is also scaled to unit length; partial correlation: item~(3) preceding Fig.~18; \eqref{eq:vif}: the horizontal-width ratio in Fig.~18. Their width ratio $\sqrt{\VIF}$ and our central-chord half-length $1/\sqrt{\VIF}$ describe different constructions.
\end{itemize}
\noindent Anderson (2003) and Rao (1973) state these relations for general covariance (dispersion) matrices; the forms in $\Rm$ and $\rv$ used in Section~\ref{sec:notation} are the forms obtained by centering each variable and dividing it by its standard deviation (Definition~\ref{def:coefficients}).

\section{List of symbols}\label{app:symbols}
The tables below collect the notation of the paper. ``Where'' gives the section, theorem or equation in which the symbol is introduced or mainly used.

\begingroup\footnotesize
\renewcommand{\arraystretch}{1.3}
\begin{longtable}{@{}>{\raggedright\arraybackslash}p{2.85cm}>{\raggedright\arraybackslash}p{4.6cm}>{\raggedright\arraybackslash}p{5.15cm}>{\raggedright\arraybackslash}p{2.13cm}@{}}
\caption{Variables, correlations and classical regression quantities.}\label{tab:sym1}\\
\toprule
Symbol & Meaning & Definition / formula & Where\\ \midrule
\endfirsthead
\multicolumn{4}{@{}l}{\footnotesize\emph{(continued)}}\\[2pt]
\toprule
Symbol & Meaning & Definition / formula & Where\\ \midrule
\endhead
\bottomrule
\endlastfoot
$p$ & number of explanatory variables & $p\in\mathbb N,\ 1\le p<\infty$ & \S\ref{sec:intro}\\
$X=(X_1,\dots,X_p)\tr$ & explanatory variables & finite positive variances; see Def.~\ref{def:coefficients} & \S\ref{sec:notation}\\
$Y$ & response variable & finite positive variance; see Def.~\ref{def:coefficients} & \S\ref{sec:notation}\\
$n$ & sample size & rows of the data matrix & \S\ref{sec:notation}\\
$X_{(-i)}$ & explanatory variables except $X_i$ & $(X_1,\ldots,X_{i-1},X_{i+1},\ldots,X_p)\tr$ & \S\ref{sec:notation}\\
\midrule[0.25pt]
$\rho_{ij}$, $\rho_i$ & simple correlations & $\corr(X_i,X_j)$, $\corr(X_i,Y)$ & \S\ref{sec:notation}\\
$\Rm$ & correlation matrix of $X$ & $(\rho_{ij})\in\R^{p\times p}$ & \S\ref{sec:notation}\\
\eqref{eq:pd} & positive definiteness (Assumption~\ref{ass:pd}) & $\Rm\succ0$ & \S\ref{sec:notation}\\
$\rv$ & correlations with the response & $(\rho_1,\dots,\rho_p)\tr\in\R^p$ & \S\ref{sec:notation}\\
$\Cm$ & full correlation matrix of $(X,Y)$ & $\bigl(\begin{smallmatrix}\Rm&\rv\\ \rv\tr&1\end{smallmatrix}\bigr)$ & \S\ref{sec:notation}\\
\midrule[0.25pt]
$\Rm_{(-i)(-i)}$, $\Rm_{(-i)i}$, $\rv_{(-i)}$ & submatrices & $\Rm$ without row and column $i$; column $i$ of $\Rm$ without entry $i$; $\rv$ without entry $i$ & Def.~\ref{def:corr}; (F3); Rem.~\ref{rem:feasible}\\
$\ah$, $\hat b_i$ & regression coefficients on the original scales & $\ah=(\hat b_1,\dots,\hat b_p)\tr$; $\hat Y=\hat a+\ah\tr X$ & \S\ref{sec:intro}; Def.~\ref{def:coefficients}\\
$\ahs$, $\bsh{i}$ & standardized regression coeff. & $\bsh{i}=\sd(X_i)\hat b_i/\sd(Y)$; $\ahs=\Rm^{-1}\rv$ & Def.~\ref{def:coefficients}; \eqref{eq:normal} in (F1)\\
$\hat a$ & intercept in linear regression & $\hat Y$ when $X_1=\cdots=X_p=0$ & \S\ref{sec:intro}\\
$\hat Y$ & fitted value & $\hat a+\sum_i\hat b_iX_i$ & \S\ref{sec:intro}\\
\midrule[0.25pt]
$\rho_{A\vdash B}$ & \multicolumn{2}{l}{multiple correlation of the regression of $B$ on the set $A$} & \S\ref{sec:notation}\\
$\rhoM$ & multiple correlation of $Y$ on $X$ & $\sqrt{\rv\tr\Rm^{-1}\rv}\;(=\sqrt{\ahs\tr\rv}$) & \eqref{eq:multiple}; Thm.~\ref{thm:multiple}; \eqref{eq:closed}\\
$\rhoP{i}$ & \multicolumn{2}{p{\dimexpr9.75cm+2\tabcolsep\relax}}{partial correlation of $X_i$ and $Y$ given $X_{(-i)}$, defined as the correlation of the residuals of $X_i$ and $Y$ after linear regression on $X_{(-i)}$, when both residual variances are positive} & \S\ref{sec:notation}; (F3); Thm.~\ref{thm:partial}\\
$\VIF_i$ & variance inflation factor of $X_i$ & $(\Rm^{-1})_{ii}=v_i=1/(1-\rho^2_{X_{(-i)}\vdash X_i})$ & Prop.~\ref{prop:bounds}\\
$\corr$, $\sd$, $\sgn$, $\det$ & \multicolumn{2}{p{\dimexpr9.75cm+2\tabcolsep\relax}}{correlation, standard deviation (population or sample), sign, determinant} & \S\ref{sec:notation}; Def.~\ref{def:coefficients}\\
\end{longtable}

\begin{longtable}{@{}>{\raggedright\arraybackslash}p{2.85cm}>{\raggedright\arraybackslash}p{4.6cm}>{\raggedright\arraybackslash}p{5.15cm}>{\raggedright\arraybackslash}p{2.13cm}@{}}
\caption{Geometry in $\R^p$.}\label{tab:sym2}\\
\toprule
Symbol & Meaning & Definition / formula & Where\\ \midrule
\endfirsthead
\multicolumn{4}{@{}l}{\footnotesize\emph{(continued)}}\\[2pt]
\toprule
Symbol & Meaning & Definition / formula & Where\\ \midrule
\endhead
\bottomrule
\endlastfoot
$\bm{x}=(x_1,\dots,x_p)\tr$ & coordinates of $\R^p$ & axis $i$ carries the correlation with $X_i$: $x_i(P)=\rho_i$, $x_i(T_j)=\rho_{ij}$ & Def.~\ref{def:objects} in \S\ref{sec:geometry}\\
$O$ & origin of $\R^p$ & $(0,\ldots,0)\in\R^p$ & Def.~\ref{def:objects} in \S\ref{sec:geometry}\\
$\bm{e}_i$ & standard basis vector & $1$ in position $i$, $0$ elsewhere & \S\ref{sec:notation}\\
$\delta_{ij}$ & Kronecker delta & $1$ if $i=j$; $0$ otherwise & \S\ref{sec:notation}\\
$|\cdot|$ & Euclidean length; absolute value & $|AB|=\sqrt{\sum_{i=1}^{p}(A_i-B_i)^2}$ & \S\ref{sec:notation}\\
\midrule[0.25pt]
$\bm H$ & symmetric positive definite matrix & $\{\bm{x} :\bm{x}\tr\bm H\bm{x}=1\}$; $\bm H=\Rm^{-1}$ for $E$ & \S\ref{sec:notation}; Lem.~\ref{lem:inscribed}\\
$\langle \bm{u},\bm{v}\rangle_{\bm H}$, $\|\bm{u}\|_{\bm H}$ & inner product and norm & $\bm{u}\tr\bm H\bm{v}$, $\langle \bm{u},\bm{u}\rangle_{\bm H}^{1/2}$ & \S\ref{sec:notation}\\
conjugate directions & $\langle \bm{u},\bm{v}\rangle_{\bm H}=0$ & & \S\ref{sec:notation}; Lem.~\ref{lem:inscribed}\\
diametral hyperplane & \multicolumn{2}{p{\dimexpr9.75cm+2\tabcolsep\relax}}{$\{\bm{x} : \langle \bm{d},\bm{x}\rangle_{\bm H}=0\}$ bisects the chords parallel to $\bm{d}$} & \S\ref{sec:notation}; Lem.~\ref{lem:inscribed}\\
$U$ & unit hypercube & $[-1,1]^p$ & Def.~\ref{def:objects} in \S\ref{sec:geometry}\\
$\partial U$ & boundary of $U$ (the $2p$ faces $x_i=\pm1$) & $\{\bm{x}\in U : \max_i|x_i|=1\}$ & Lem.~\ref{lem:inscribed}\\
\midrule[0.25pt]
$E$ & inscribed ellipsoid (surface) & $\{\bm{x} : \bm{x}\tr\Rm^{-1}\bm{x}=1\}$ & Def.~\ref{def:objects} in \S\ref{sec:geometry}\\
$\bar E$ & solid ellipsoid bounded by $E$ & $\{\bm{x} : \bm{x}\tr\Rm^{-1}\bm{x}\le1\}$ & Def.~\ref{def:objects} in \S\ref{sec:geometry}\\
$E_{\bm M}$ & ellipsoid defined by a positive definite correlation matrix $\bm M$ & $\{\bm{x} : \bm{x}\tr\bm M^{-1}\bm{x}=1\}$ & Lem.~\ref{lem:inscribed} in \S\ref{sec:geometry}\\
$\pm T_i$ & tangency points of $E$ with $x_i=\pm1$ & $\pm\Rm \bm{e}_i$ ($\Rm \bm{e}_i$: $i$-th column of $\Rm$) & Def.~\ref{def:objects}; Lem.~\ref{lem:inscribed}\\
$P$ & the point whose position is read & $\rv=(\rho_1,\dots,\rho_p)\tr$ & Def.~\ref{def:objects}; Lem.~\ref{lem:inside}\\
\midrule[0.25pt]
$P'$ & point where the ray $OP$ meets $E$ & $P/\rhoM$ (when $P\ne O$) & Thm.~\ref{thm:multiple} in \S\ref{sec:theorems}\\
$\rhoM E$ & homothetic copy of $E$ through $P$ & $\{\bm{x} : \nR{\bm{x}}=\rhoM\}$ & \S\ref{sec:theorems}; \S\ref{sec:related}\\
$f_i:\R^p\to\R$ (linear) & linear scalar field giving $\bsh{i}$ at $P$ & $f_i(T_j)=\delta_{ij}$; $f_i(\bm{x})=\bm{e}_i\tr\Rm^{-1}\bm{x}=\langle \bm{e}_i,\bm{x}\rangle_{\Rm^{-1}}$ & Thm.~\ref{thm:coef} in \S\ref{sec:theorems}\\
$c$ & real scalar (level value of $f_i$; also a generic scalar in proofs) & $f_i(\bm{x})=c$; $c\in\R$ & Lem.~\ref{lem:inscribed}; after Thm.~\ref{thm:coef}; \S\ref{sec:related}\\
$\ell_i\parallel x_i\text{-axis}$ & a line through $P$ & $\{P+t\bm{e}_i : t\in\R\}\parallel x_i\text{-axis}$ & Def.~\ref{def:objects} in \S\ref{sec:geometry}\\
\midrule[0.25pt]
\eqref{eq:chord} & chord of positive length (Assumption~\ref{ass:chord}) & $\operatorname{length}(\ell_i\cap\bar E)>0$ for the chosen $i$ & \S\ref{sec:partial}; Thm.~\ref{thm:partial}; Prop.~\ref{prop:signs}\\
$Q_i^-$, $Q_i^+$ & endpoints of the chord $\ell_i\cap\bar E$ & $P+t_i^{\pm} \bm{e}_i$, eq.~\eqref{eq:closed} & Thm.~\ref{thm:partial} in \S\ref{sec:partial}\\
$g_i:\R\to\ell_i$ (affine) & ruler on the chord $\ell_i\cap\bar E$ & $g_i(\pm1)=Q_i^\pm$; reading $g_i^{-1}(P)$ & Thm.~\ref{thm:partial} in \S\ref{sec:partial}\\
$v_i$ & $i$-th diagonal entry of $\Rm^{-1}$ & $(\Rm^{-1})_{ii}=\bm{e}_i\tr\Rm^{-1}\bm{e}_i=\VIF_i$ & Thm.~\ref{thm:partial} in \S\ref{sec:partial}\\
$D_i$ & reduced discriminant of the chord equation & $(\bsh{i})^2+v_i(1-\rhoM^2)$ & Thm.~\ref{thm:partial}; eq.~\eqref{eq:closed}\\
\midrule[0.25pt]
$t_i^{\pm}$ & parameters of the chord endpoints & roots of $\nR{P+t\bm{e}_i}^2=1$, eq.~\eqref{eq:closed} & Thm.~\ref{thm:partial} in \S\ref{sec:partial}\\
$\bm{b}$ & coeff. vector in Waller's construction & $\bm{b}\mapsto\Rm\bm{b}$ carries $\{\bm{b}\colon \bm{b}\tr\Rm\bm{b}=c\}$ to $\sqrt c\,E$ & \S\ref{sec:related}\\
\end{longtable}

\endgroup

\section{Numerical verification}\label{app:code}
The following R script (base R only) draws random data, computes the classical quantities (F1)--(F3), constructs $E$, $P$, $f_i$ and the chords, and asserts equality of the geometric readings of Theorems~\ref{thm:multiple}, \ref{thm:coef} and~\ref{thm:partial} with the classical values to $10^{-8}$; it passes for $p\in\{2,3,4,6,10\}$ with 200 random problems each. It is a shortened version of the file \texttt{verify\_theorems.R}, which also checks Lemma~\ref{lem:inscribed} and the central-chord/VIF identity of Proposition~\ref{prop:bounds}(b) and~(a).
{\small\begin{verbatim}
set.seed(20260921)
geometric <- function(R, P) {
  Rinv <- solve(R); p <- length(P); q <- drop(t(P) %*% Rinv %*% P)
  rho  <- sqrt(sum(P^2)) / sqrt(sum((P / sqrt(q))^2))  # Thm 9: |OP|/|OP'|
  beta <- drop(Rinv %*% P)                # Thm 10: f_i(x) = e_i' R^-1 x
  pc   <- numeric(p)                      # Thm 12: ruler on the chord
  for (i in 1:p) {
    e <- diag(p)[, i]; a <- Rinv[i, i]; b <- drop(Rinv[i, ] %*% P)
    d <- sqrt(b^2 - a * (q - 1))
    Qm <- P + (-b - d) / a * e; Qp <- P + (-b + d) / a * e
    mid <- (Qm + Qp) / 2; half <- (Qp - Qm) / 2
    pc[i] <- sum((P - mid) * half) / sum(half^2)
  }
  list(rho = rho, beta = beta, pc = pc)
}
for (p in c(2, 3, 4, 6, 10)) for (trial in 1:200) {
  A <- matrix(rnorm((p + 1)^2), p + 1); S <- A %*% t(A) + 0.3 * diag(p + 1)
  Z <- scale(matrix(rnorm(400 * (p + 1)), 400) %*% chol(S))  # 400 rows
  RXY <- cor(Z); R <- RXY[1:p, 1:p]; P <- RXY[1:p, p + 1]
  beta <- unname(coef(lm(Z[, p + 1] ~ Z[, 1:p] - 1)))   # (F1)
  rho  <- drop(cor(Z[, 1:p] %*% beta, Z[, p + 1]))         # (F2)
  K    <- solve(RXY)                                       # (F3)
  pc   <- -K[1:p, p + 1] / sqrt(diag(K)[1:p] * K[p + 1, p + 1])
  pc   <- unname(pc)
  g <- geometric(R, P)
  stopifnot(isTRUE(all.equal(g$rho,  rho,  tolerance = 1e-8)),
            isTRUE(all.equal(g$beta, beta, tolerance = 1e-8)),
            isTRUE(all.equal(g$pc,   pc,   tolerance = 1e-8)))
}
cat("all checks passed\n")
\end{verbatim}}

\section{Data of Example~\ref{ex:baseball}}\label{app:data}
The records of Example~\ref{ex:baseball} were transcribed by the author from \emph{Nihon Puro Yakyu Kiroku Daihyakka 2004} (The Official Baseball Encyclopedia 2004; Nippon Professional Baseball Organization, 2004): for each of the 50 seasons 1954--2003 and each of the six teams of the Central League, the final rank and the number of games from the team standings table, the runs scored from the team batting table and the runs allowed from the team pitching table of that season, matched by team name. The values were checked against photographs of the pages.
They are provided as ancillary files of this arXiv submission as the file \texttt{central\_league\_1954to2003.csv} (300 rows, UTF-8, comma-separated; columns \texttt{year}, \texttt{rank}, \texttt{team}, \texttt{team\_ja}, \texttt{games}, \texttt{runs\_scored}, \texttt{runs\_allowed}, \texttt{season\_all\_130}, the last being \texttt{TRUE} when every team played 130 games in that season), together with the script \texttt{compute\_correlations.R} (base R only). The script checks the file---300 rows, six teams in every season, and in every season the runs scored by the six teams sum to the runs allowed by them, since all games are within the league---then selects the seasons with \texttt{season\_all\_130 = TRUE} and prints the Pearson correlations of the pooled observations:
{\small\begin{verbatim}
seasons with 130 games for every team: 31
team-season observations: 186
  rho_1  = cor(runs scored,  rank) = -0.4180
  rho_2  = cor(runs allowed, rank) = +0.4724
  rho_12 = cor(runs scored,  runs allowed) = +0.4232
\end{verbatim}}
Table~\ref{tab:data} lists the 186 observations used. Teams are listed under the name printed in the source for each season (romanized), so a team that changed its name appears under more than one name; the four other teams in the table kept their names throughout.
\begingroup\scriptsize\renewcommand{\arraystretch}{1.05}
\begin{longtable}{@{}rlrrr@{\hspace{0.8em}}rlrrr@{\hspace{0.8em}}rlrrr@{}}
\caption{The 186 team--season observations of Example~\ref{ex:baseball}: season, team, final rank ($Y$), runs scored ($X_1$) and runs allowed ($X_2$), for the 31 seasons in which every team played 130 games.}\label{tab:data}\\
\toprule year & team & $Y$ & $X_1$ & $X_2$ & year & team & $Y$ & $X_1$ & $X_2$ & year & team & $Y$ & $X_1$ & $X_2$\\ \midrule
\endfirsthead
\multicolumn{15}{@{}l}{\footnotesize\emph{(continued)}}\\[2pt] \toprule year & team & $Y$ & $X_1$ & $X_2$ & year & team & $Y$ & $X_1$ & $X_2$ & year & team & $Y$ & $X_1$ & $X_2$\\ \midrule
\endhead
\bottomrule \endlastfoot
1954 & Chunichi & 1 & 510 & 376 & 1971 & Taiyo & 3 & 363 & 354 & 1981 & Chunichi & 5 & 545 & 537\\
1954 & Kyojin & 2 & 613 & 385 & 1971 & Hiroshima & 4 & 422 & 444 & 1981 & Taiyo & 6 & 444 & 609\\
1954 & Hanshin & 3 & 569 & 425 & 1971 & Hanshin & 5 & 333 & 412 & 1982 & Chunichi & 1 & 529 & 478\\
1954 & Hiroshima & 4 & 419 & 581 & 1971 & Yakult & 6 & 427 & 451 & 1982 & Kyojin & 2 & 511 & 439\\
1954 & Kokutetsu & 5 & 434 & 518 & 1972 & Kyojin & 1 & 587 & 492 & 1982 & Hanshin & 3 & 517 & 496\\
1954 & Yosho & 6 & 365 & 625 & 1972 & Hanshin & 2 & 470 & 431 & 1982 & Hiroshima & 4 & 513 & 493\\
1955 & Kyojin & 1 & 579 & 291 & 1972 & Chunichi & 3 & 452 & 491 & 1982 & Taiyo & 5 & 473 & 555\\
1955 & Chunichi & 2 & 420 & 334 & 1972 & Yakult & 4 & 530 & 543 & 1982 & Yakult & 6 & 437 & 519\\
1955 & Hanshin & 3 & 423 & 392 & 1972 & Taiyo & 5 & 481 & 523 & 1983 & Kyojin & 1 & 622 & 542\\
1955 & Hiroshima & 4 & 395 & 508 & 1972 & Hiroshima & 6 & 482 & 522 & 1983 & Hiroshima & 2 & 610 & 527\\
1955 & Kokutetsu & 5 & 424 & 428 & 1973 & Kyojin & 1 & 565 & 470 & 1983 & Taiyo & 3 & 561 & 631\\
1955 & Taiyo & 6 & 290 & 578 & 1973 & Hanshin & 2 & 433 & 404 & 1983 & Hanshin & 4 & 598 & 604\\
1956 & Kyojin & 1 & 568 & 351 & 1973 & Chunichi & 3 & 441 & 433 & 1983 & Chunichi & 5 & 563 & 603\\
1956 & Hanshin & 2 & 386 & 283 & 1973 & Yakult & 4 & 377 & 406 & 1983 & Yakult & 6 & 557 & 604\\
1956 & Chunichi & 3 & 369 & 319 & 1973 & Taiyo & 5 & 423 & 480 & 1984 & Hiroshima & 1 & 635 & 494\\
1956 & Kokutetsu & 4 & 346 & 405 & 1973 & Hiroshima & 6 & 390 & 436 & 1984 & Chunichi & 2 & 641 & 551\\
1956 & Hiroshima & 5 & 344 & 491 & 1974 & Chunichi & 1 & 561 & 537 & 1984 & Kyojin & 3 & 637 & 517\\
1956 & Taiyo & 6 & 319 & 483 & 1974 & Kyojin & 2 & 589 & 460 & 1984 & Hanshin & 4 & 563 & 637\\
1957 & Kyojin & 1 & 471 & 393 & 1974 & Yakult & 3 & 453 & 458 & 1984 & Yakult & 5 & 530 & 653\\
1957 & Hanshin & 2 & 462 & 375 & 1974 & Hanshin & 4 & 457 & 498 & 1984 & Taiyo & 6 & 484 & 638\\
1957 & Chunichi & 3 & 404 & 347 & 1974 & Taiyo & 5 & 528 & 596 & 1985 & Hanshin & 1 & 731 & 593\\
1957 & Kokutetsu & 4 & 418 & 425 & 1974 & Hiroshima & 6 & 448 & 487 & 1985 & Hiroshima & 2 & 615 & 570\\
1957 & Hiroshima & 5 & 352 & 436 & 1975 & Hiroshima & 1 & 510 & 421 & 1985 & Kyojin & 3 & 616 & 562\\
1957 & Taiyo & 6 & 355 & 486 & 1975 & Chunichi & 2 & 521 & 466 & 1985 & Taiyo & 4 & 589 & 653\\
1958 & Kyojin & 1 & 521 & 370 & 1975 & Hanshin & 3 & 477 & 487 & 1985 & Chunichi & 5 & 537 & 596\\
1958 & Hanshin & 2 & 444 & 387 & 1975 & Yakult & 4 & 437 & 470 & 1985 & Yakult & 6 & 550 & 664\\
1958 & Chunichi & 3 & 423 & 381 & 1975 & Taiyo & 5 & 484 & 548 & 1986 & Hiroshima & 1 & 511 & 420\\
1958 & Kokutetsu & 4 & 359 & 478 & 1975 & Kyojin & 6 & 473 & 510 & 1986 & Kyojin & 2 & 600 & 437\\
1958 & Hiroshima & 5 & 407 & 459 & 1976 & Kyojin & 1 & 661 & 505 & 1986 & Hanshin & 3 & 574 & 536\\
1958 & Taiyo & 6 & 357 & 436 & 1976 & Hanshin & 2 & 602 & 500 & 1986 & Taiyo & 4 & 482 & 547\\
1959 & Kyojin & 1 & 551 & 382 & 1976 & Hiroshima & 3 & 586 & 565 & 1986 & Chunichi & 5 & 432 & 519\\
1959 & Hanshin & 2 & 388 & 362 & 1976 & Chunichi & 4 & 519 & 610 & 1986 & Yakult & 6 & 480 & 620\\
1959 & Chunichi & 2 & 477 & 428 & 1976 & Yakult & 5 & 504 & 555 & 1987 & Kyojin & 1 & 608 & 447\\
1959 & Kokutetsu & 4 & 422 & 491 & 1976 & Taiyo & 6 & 482 & 619 & 1987 & Chunichi & 2 & 571 & 492\\
1959 & Hiroshima & 5 & 401 & 414 & 1977 & Kyojin & 1 & 648 & 504 & 1987 & Hiroshima & 3 & 549 & 450\\
1959 & Taiyo & 6 & 352 & 514 & 1977 & Yakult & 2 & 593 & 553 & 1987 & Yakult & 4 & 553 & 637\\
1960 & Taiyo & 1 & 411 & 361 & 1977 & Chunichi & 3 & 575 & 614 & 1987 & Taiyo & 5 & 516 & 601\\
1960 & Kyojin & 2 & 444 & 459 & 1977 & Hanshin & 4 & 558 & 590 & 1987 & Hanshin & 6 & 437 & 607\\
1960 & Hanshin & 3 & 436 & 403 & 1977 & Hiroshima & 5 & 603 & 672 & 1988 & Chunichi & 1 & 549 & 483\\
1960 & Hiroshima & 4 & 426 & 403 & 1977 & Taiyo & 6 & 643 & 687 & 1988 & Kyojin & 2 & 518 & 442\\
1960 & Chunichi & 5 & 436 & 454 & 1978 & Yakult & 1 & 648 & 633 & 1988 & Hiroshima & 3 & 447 & 442\\
1960 & Kokutetsu & 6 & 388 & 461 & 1978 & Kyojin & 2 & 606 & 508 & 1988 & Taiyo & 4 & 514 & 542\\
1961 & Kyojin & 1 & 435 & 383 & 1978 & Hiroshima & 3 & 713 & 596 & 1988 & Yakult & 5 & 496 & 534\\
1961 & Chunichi & 2 & 427 & 385 & 1978 & Taiyo & 4 & 596 & 554 & 1988 & Hanshin & 6 & 444 & 525\\
1961 & Kokutetsu & 3 & 387 & 361 & 1978 & Chunichi & 5 & 509 & 635 & 1989 & Kyojin & 1 & 520 & 358\\
1961 & Hanshin & 4 & 405 & 411 & 1978 & Hanshin & 6 & 532 & 678 & 1989 & Hiroshima & 2 & 510 & 431\\
1961 & Hiroshima & 5 & 417 & 464 & 1979 & Hiroshima & 1 & 601 & 523 & 1989 & Chunichi & 3 & 538 & 524\\
1961 & Taiyo & 6 & 418 & 485 & 1979 & Taiyo & 2 & 595 & 562 & 1989 & Yakult & 4 & 507 & 559\\
1969 & Kyojin & 1 & 599 & 472 & 1979 & Chunichi & 3 & 589 & 566 & 1989 & Hanshin & 5 & 497 & 580\\
1969 & Hanshin & 2 & 402 & 353 & 1979 & Hanshin & 4 & 559 & 602 & 1989 & Taiyo & 6 & 474 & 594\\
1969 & Taiyo & 3 & 444 & 470 & 1979 & Kyojin & 5 & 554 & 549 & 1994 & Kyojin & 1 & 516 & 483\\
1969 & Chunichi & 4 & 463 & 470 & 1979 & Yakult & 6 & 550 & 646 & 1994 & Chunichi & 2 & 535 & 481\\
1969 & Atoms & 5 & 495 & 569 & 1980 & Hiroshima & 1 & 544 & 476 & 1994 & Hiroshima & 3 & 585 & 584\\
1969 & Hiroshima & 6 & 414 & 483 & 1980 & Yakult & 2 & 562 & 462 & 1994 & Yakult & 4 & 486 & 585\\
1970 & Kyojin & 1 & 499 & 365 & 1980 & Kyojin & 3 & 507 & 450 & 1994 & Hanshin & 4 & 503 & 500\\
1970 & Hanshin & 2 & 435 & 335 & 1980 & Taiyo & 4 & 534 & 587 & 1994 & Yokohama & 6 & 543 & 535\\
1970 & Taiyo & 3 & 436 & 399 & 1980 & Hanshin & 5 & 521 & 549 & 1996 & Kyojin & 1 & 563 & 478\\
1970 & Hiroshima & 4 & 393 & 421 & 1980 & Chunichi & 6 & 495 & 639 & 1996 & Chunichi & 2 & 641 & 599\\
1970 & Chunichi & 5 & 435 & 462 & 1981 & Kyojin & 1 & 517 & 418 & 1996 & Hiroshima & 3 & 670 & 597\\
1970 & Yakult & 6 & 336 & 552 & 1981 & Hiroshima & 2 & 605 & 519 & 1996 & Yakult & 4 & 536 & 560\\
1971 & Kyojin & 1 & 538 & 434 & 1981 & Hanshin & 3 & 523 & 480 & 1996 & Yokohama & 5 & 571 & 650\\
1971 & Chunichi & 2 & 437 & 425 & 1981 & Yakult & 4 & 520 & 591 & 1996 & Hanshin & 6 & 482 & 579\\
\end{longtable}\endgroup

\end{document}